%% file: main.tex
\PassOptionsToPackage{psdextra}{hyperref}

\documentclass[acmsmall]{acmart}

\input{macros}

\title{Efficient Grammar-Constrained Decoding via Parser Stack Classification}

\setcopyright{cc}
\setcctype{by}

\begin{document}

\author{Yongmin Li}
\orcid{0009-0001-3702-0043}
\affiliation{%
  \department{Key Laboratory of High Confidence Software Technology (PKU), MoE and School of Computer Science}
  \institution{Peking University}
  \city{Beijing}
  \country{China}
}
\email{liyongmin@pku.edu.cn}

\author{Yihong Dong}
\orcid{0000-0001-6228-4019}
\affiliation{%
  \department{School of Computer Science}
  \institution{Shanghai Jiao Tong University}
  \city{Shanghai}
  \country{China}
}
\email{dongyh@sjtu.edu.cn}

\author{Jia Li}
\orcid{0000-0002-5579-8852}
\affiliation{%
  \department{College of AI}
  \institution{Tsinghua University}
  \city{Beijing}
  \country{China}
}
\email{jia\_li@mail.tsinghua.edu.cn}

\author{Ge Li}
\correspondingauthor
\orcid{0000-0002-5828-0186}
\affiliation{%
  \department{Key Laboratory of High Confidence Software Technology (PKU), MoE and School of Computer Science}
  \institution{Peking University}
  \city{Beijing}
  \country{China}
}
\email{lige@pku.edu.cn}

\input{chapters/abstract}

\begin{CCSXML}
<ccs2012>
   <concept>
       <concept_id>10011007.10011074.10011092.10011782</concept_id>
       <concept_desc>Software and its engineering~Automatic programming</concept_desc>
       <concept_significance>500</concept_significance>
       </concept>
   <concept>
       <concept_id>10003752.10003766.10003771</concept_id>
       <concept_desc>Theory of computation~Grammars and context-free languages</concept_desc>
       <concept_significance>300</concept_significance>
       </concept>
   <concept>
       <concept_id>10010147.10010178</concept_id>
       <concept_desc>Computing methodologies~Artificial intelligence</concept_desc>
       <concept_significance>300</concept_significance>
       </concept>
 </ccs2012>
\end{CCSXML}

\ccsdesc[500]{Software and its engineering~Automatic programming}
\ccsdesc[300]{Theory of computation~Grammars and context-free languages}
\ccsdesc[300]{Computing methodologies~Artificial intelligence}

\keywords{Code Generation, Language Model, Constrained Decoding, Context-Free Grammar, Grammar-Constrained Decoding}

\maketitle

\input{chapters/introduce}
\input{chapters/background}

\input{chapters/method}
\input{chapters/experiment}

\input{chapters/discuss}
\input{chapters/conclude}

\begin{acks}
This research is supported by the National Key R\&D Program under Grant No. 2023YFB4503801, the National Natural Science Foundation of China under Grant No. 62192733, 62192730, 62192731, the Beijing Major Science and Technology Project under Contract no. Z251100008425005.

This research is also supported by Tsinghua University - Keystone Electrical (Zhejiang) Co., Ltd. Joint Research Center for Embodied Multimodal Artificial Intelligence (JCEMAI) and the Beijing Natural Science Foundation under Grant No. 4264107.
\end{acks}

\input{chapters/reproduce}

\bibliographystyle{ACM-Reference-Format}
\bibliography{ref}

\end{document}

%% file: macros.tex
\usepackage{mathtools}
\usepackage{amsthm}
\usepackage{empheq}
\usepackage{booktabs}

\newtheorem{theorem}{Theorem}

\usepackage{algpseudocodex}
\usepackage{algorithm}

\usepackage{listings}

\definecolor{terminal}{RGB}{32,108,135}
\definecolor{string}{RGB}{143,2,18}
\definecolor{identifier}{RGB}{0,0,109}

\lstdefinelanguage{lark}{
    morekeywords={import,ignore,},
    otherkeywords={\%,->},
    morestring=[b][\color{string}]{"},
    moredelim=[is][\color{terminal}]{<}{>}
}

\usepackage{threeparttable}
\usepackage{longtable}
\usepackage{subcaption}
\usepackage{multirow}
\usepackage{makecell}

\usepackage{adjustbox}

\usepackage[inline]{enumitem}

\usepackage{wrapfig}

\usepackage{pifont}

\usepackage{tcolorbox}
\usepackage{colortbl}
\newcommand{\mycolor}{blue!25}
\newcommand{\chl}{\cellcolor{\mycolor}}

\newcommand\method{PSC}

\NewDocumentCommand\myarrow{s O{\xrightarrow} D<>{\mathcal} m m}{%
    #2[#3{#4}]{#5}%
    \IfBooleanTF{#1}%
        {\mathrel{\vphantom{\to}^*}}%
        {}%
}
\newcommand\mathsc[1]{\mathrm{\textsc{#1}}}
\newcommand\oncefsa[1]{I_{#1}}

\usepackage[normalem]{ulem}

\usepackage{tikz}
\usetikzlibrary{positioning}

%% file: chapters/abstract.tex
\begin{abstract}

LLMs are widely used to generate structured output like source code or JSON.
Grammar-constrained decoding (GCD) can guarantee the syntactic validity of the generated output, by masking out tokens that violate rules specified by a context-free grammar.
However, the online computational overhead of existing GCD methods, with latency typically scaling linearly with vocabulary size, limits the throughput of LLMs, especially for models with large vocabularies.
To address this issue, we propose \method{}, a novel grammar-constrained decoding method.
By combining acceptance conditions of all vocabulary tokens into a single classifier of the parser stack during preprocessing, \method{} can compute the complete vocabulary mask by checking the parser stack exactly once per decoding step, with time complexity independent of the vocabulary size.
Experiments show that \method{} computes masks up to 700× faster than baselines on complex programming language grammars, and up to 30× faster for schema-conformant JSON; end-to-end LLM throughput with \method{} approaches that of unconstrained decoding.
We analyze the preprocessing overhead for preprocessing providers and decoding users, and provide a break-even point analysis to help users decide whether to do preprocessing by themselves.

\end{abstract}

%% file: chapters/introduce.tex
\section{Introduction}
\label{sec:introduce}

In recent years, the ability for Large Language Models (LLMs) to generate structured output has been widely recognized and utilized~\citep{qwen2.5,llama3,gemma3}.
\begin{enumerate*}
\item Source code can be viewed as structured output that adheres to the syntax of programming languages, and LLM-based coding assistants, such as GitHub Copilot~\citep{copilot} and Cursor~\citep{cursor}, have been widely adopted by developers to assist in writing code to improve their productivity.
\item When LLMs are used as a tool, users often expect the generated output to conform to a specific format, such as Markdown or JSON with custom schemas~\citep{liu2024we,vllmteamStructuredOutputsVLLM,openaiStructuredModelOutputs}.
\end{enumerate*}
All of these applications rely on the ability of LLMs to generate output that adheres to a specific syntax.

However, generating in a structured format is complex, as it requires not only understanding the semantics of given input but also adhering to the specific grammars of target formats.
Since language models are essentially probabilistic models, there is no guarantee that the generated output will always conform to the required grammar.

To address this issue, \emph{grammar-constrained decoding}~(GCD)~\citep{gcd,picard,synchromesh,syncode} is proposed to ensure that the generated output always conforms to the specified context-free grammar.
A GCD method works by incorporating a grammar checker into the decoding process, as shown in Figure~\ref{fig:demo:gcd}.
At each decoding step, the checker determines which tokens in the vocabulary can be appended to the current prefix while not violating the grammar.
The logits generated by the language model are then masked to only allow the valid tokens, and the next token is generated by sampling from the masked logits.

\input{floats/demos}

The overhead of GCD is determined by the newly introduced step of validity calculation.
A naive implementation, as shown in Figure~\ref{fig:demo:others} would require calling the parser for \emph{every} token in the vocabulary to check its validity,
resulting in a time complexity proportional to $|\mathcal V|$ per decoding step, where $|\mathcal V|$ is the vocabulary size.
This can add significant overhead, especially for large vocabularies in modern language models,
e.g. 128k tokens in Llama-3~\citep{llama3}, 151k tokens in Qwen~series~\citep{qwen}, or 262k tokens in Gemma~3~\citep{gemma3}.
The overhead is particularly pronounced for smaller models, where the time taken by model inference is relatively small.

To speed up GCD, various techniques have been proposed in the literature,
summarized in Section~\ref{sec:related}.
However, they cannot fundamentally change the $\mathcal O(|\mathcal V|)$ worst-case time complexity.

We propose a novel GCD method, \textbf{P}arser \textbf{S}tack \textbf{C}lassification (\method{}), that replaces the repetitive runtime parsing over the whole vocabulary with a one-time classification of the current parser stack, as shown in Figure~\ref{fig:demo:ours}.
The checking process of the parser can be seen as a function of both the token and the state of the parser, which is usually a stack.
For each token, our method \method{} constructs a finite-state automaton (FSA) that represents the exact requirements on the parser stack to accept that token,
i.e., the FSA accepts a parser stack if and only if that token is accepted by a parser with that stack.
All these FSAs can then be combined into a single FSA that classifies the parser stack into a finite number of classes, each corresponding to a different vocabulary mask.
During decoding, we only need to check the parser stack exactly once per decoding step to get the vocabulary mask, which is ready to be applied to the logits.
This eliminates the need to call the parser for each token in the vocabulary, resulting in a significant speedup.

We conduct extensive experiments on grammar-constrained decoding in Java, Go, SQL, and schema-conformant JSON to evaluate the efficiency of our method.
Compared to the current state-of-the-art method, our method achieves up to 700 times speedup in mask computation on complex programming language grammars, 
and up to 30 times speedup for schema-conformant JSON generation.
In the end-to-end decoding throughput experiments, the throughput of \method{} approaches that of unconstrained decoding, and is significantly higher than the current state-of-the-art method, especially on smaller models and larger batch sizes.

We also analyze the preprocessing overhead of \method{}.
It is small for simple grammars, but can be significant for complex grammars.
Since the preprocessing results can be shared, the users are encouraged to use the preprocessing results provided by the model developers or other third parties when available, and only perform preprocessing by themselves when necessary.
Before doing preprocessing by themselves, the users should consider the break-even point analysis by estimating the expected decoding usage to decide whether to do preprocessing by themselves.

In summary, this paper makes the following contributions.
\begin{itemize}[wide]
  \item We propose a novel GCD method \method{} that leverages finite-state automata to classify parser states to use the precomputed vocabulary mask, significantly reducing the time overhead of grammar-constrained decoding.
  \item We provide a theoretical analysis of \method{}, justifying its correctness and providing a theoretical foundation for future research on grammar-constrained decoding.
  Specifically, we prove that the set of the parser stacks that can accept a given token can be formally described as a regular language.
  \item We demonstrate the efficiency of \method{} through extensive experiments on grammar-constrained decoding in Java, Go, Python, and schema-conformant JSON, achieving significant speedup in mask computation compared to existing techniques; end-to-end decoding throughput with \method{} approaches that of unconstrained decoding.
  \item We analyze the preprocessing overhead of \method{} for preprocessing providers and decoding users, and provide a break-even point analysis to help users decide whether to preprocess by themselves.
\end{itemize}

%% file: floats/demos.tex
\begin{figure}[t]
\centering

\begin{subfigure}{.8\textwidth}
\centering
\includegraphics*[width=\textwidth]{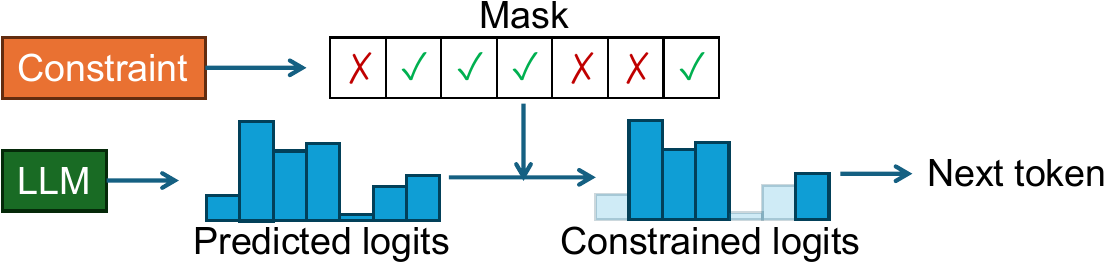}
\caption{Decoding step in grammar-constrained decoding (GCD).} \label{fig:demo:gcd}
\end{subfigure}\\
\subfloat[Naive GCD implementation.]{%
\includegraphics*[height=.1\textheight]{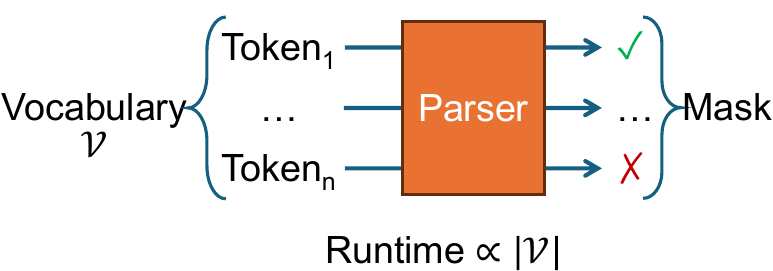}%
\label{fig:demo:others}%
}\hfill%
\subfloat[Our GCD method \method{}.]{
\includegraphics*[height=.1\textheight]{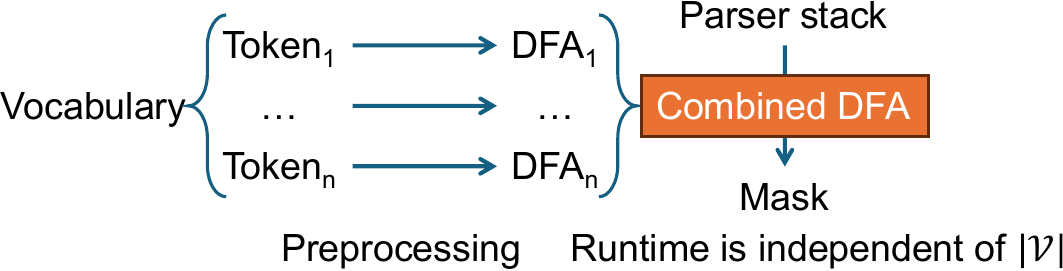}%
\label{fig:demo:ours}%
}
\caption{An illustration of grammar-constrained decoding, showing (a) the overall working process, (b) the naive implementation that directly simulates the PDA, and (c) our method \method{} that precomputes the DFA and the valid token masks.}
\label{fig:demo}
\end{figure}

%% file: chapters/background.tex
\section{Background and Related Work}\label{sec:background}

We introduce the task of grammar-constrained decoding in this section, give definitions of the concepts and symbols used in the paper, and then review the related work.
A symbol table is included in Table~\ref{tab:symbols} for most important symbols used in this paper.

\subsection{The Task: Grammar-Constrained Decoding}

Let $\Sigma$ be the character set used by the language model, e.g. the Unicode.
Given a prefix of tokens, the task of a language model is to generate the next-token distribution over the vocabulary $\mathcal V \subset \Sigma^+$.
For a language $L \subseteq \Sigma^*$, the task of \emph{constrained decoding} aims to generate a sample in $L$ from the language model.
In each step, given prefix $x \in \Sigma^*$, it calculates the set of valid tokens in $\mathcal{V}$, i.e. tokens that, when concatenated after $x$, become a prefix of some strings in $L$.
\begin{equation}
    c(x \in \Sigma^*) := \left\{v \in \mathcal{V} \middle| \exists y \in \Sigma^*, x v y \in L \right\}. \label{eq:cons}
\end{equation}

When the language $L$ is defined by a context-free grammar, the task is called \emph{grammar-constrained decoding}~\citep{syncode,koo2024automatabased,greatgramma,llguidance}.
Determining whether a string is syntactically valid usually involves two phases: lexical analysis and syntax analysis\footnote{To ease the presentation, the step of lexical analysis is omitted in previous sections. The term ``parser'' in previous sections should be realized as the combination of the lexer and the parser defined here, and the term ``parser stack'' should be realized as the concatenation of the lexer state and the parser state, which is a simple state without internal structure, and a stack, respectively.} \citep{ahoTheoryParsingTranslation1972}.
In lexical analysis, the lexer $\mathcal T$, usually modeled as a deterministic finite-state transducer (FST) \citep{ahoTheoryParsingTranslation1972,koo2024automatabased,greatgramma}, transduces the text $w \in \Sigma^*$ into a terminal sequence $\mathcal T(w) \in \Gamma^*$, where $\Gamma$ is the set of terminals.
In syntax analysis, the parser $\mathcal P$, usually modeled as a terminating deterministic push-down automaton (PDA) \citep{ahoTheoryParsingTranslation1972}, determines whether a terminal sequence $x \in \Gamma^*$ is valid, here written as $x \in \mathcal P$.
So we have
\begin{equation}
    w \in L \iff \mathcal T(w) \in \mathcal P.\label{eq:in-lang-mean}
\end{equation}

\subsection{Finite-State Transducer (FST)}
A \emph{finite-state transducer} (FST) \citep{ahoTheoryParsingTranslation1972} $\mathcal{T}$ is defined by a finite set of states $Q$, the input alphabet $\Sigma$, the output alphabet $\Gamma$, the start state $q_0 \in Q$, the final states $F \subseteq Q$, and transitions $\delta: Q \times \Sigma_\varepsilon \to 2^{\Gamma^* \times Q}$.
If $\delta(q, a) \ni (y, q')$,
we write $q \myarrow{T}{a:y} q'$.
We write $\myarrow*{}{}$ for consecutive transitions.
For $q \in Q$, we write $q \myarrow*{T}{\varepsilon:\varepsilon} q$.
For $q \myarrow*{T}{s:x} q'$ and $q' \myarrow{T}{t:y} q''$, we write $q \myarrow*{T}{st:xy} q''$.

$\mathcal{T}$ is \emph{deterministic} if, for all $q \in Q$, 
either $(|\delta(q, a)| \leq 1, \forall a \in \Sigma)$ and $\delta(q, \varepsilon) = \varnothing$, or $(\delta(q, a) = \varnothing, \forall a \in \Sigma)$ and $\delta(q, \varepsilon) = 1$.
Informally, this means that from any state, for any given string, there is exactly one possible outcome.

For $w \in \Sigma^*, q \in Q$, we define $\mathcal T_q(w)$ as $\left\{v \in \Gamma^* \middle| \exists q' \in F, q \myarrow*{T}{w:v} q' \right\}$,
meaning the possible outcomes when we feed $w$ into $\mathcal T$ starting from the state $q$.
When $\mathcal T$ is deterministic and $v \in \mathcal T(w)$, we also write $\mathcal T(w) = v$.
For $W \subseteq \Sigma^*$, we define $\mathcal T_q(W)$ as $\bigcup_{w \in W} \mathcal T_q(w)$.
$q$ defaults to $q_0$ when omitted.

We call a state $q \in Q$ \emph{stable} if the FST does not need to take any immediate action on $q$, i.e. $\delta(q, \varepsilon) = \varnothing$.
If $q \myarrow*{T}{s:t} q'$ and $q'$ is stable, we also write $q \myarrow*[\xRightarrow]{T}{s:t} q'$.

Given two FSTs $\mathcal S$ and $\mathcal T$ where the output alphabet of $\mathcal S$ is the input alphabet $\mathcal T$, $\Gamma^{\mathcal S} = \Sigma^{\mathcal T}$,
their \emph{composition} is a new FST $\mathcal S \circ \mathcal T$ by feeding the output of $\mathcal S$ into the input of $\mathcal T$.

\subsection{Finite-State Automaton (FSA)}

A \emph{finite-state automaton} (FSA) $\mathcal A$ can be defined by removing all output labels from an FST.
We say $\mathcal A$ \emph{accept}s $w \in \Sigma^*$ from state $q$ if $\mathcal A_q(w) \neq \varnothing$, and write $w \in \mathcal A_q$, where $q$ defaults to $q_0$ when omitted.
Two FSAs are \emph{equivalent} if they accept the same language.

$\mathcal A$ is \emph{deterministic} if there is no $\varepsilon$ transition in $\delta$.
Every nondeterministic FSA can be \emph{determinized} into an equivalent deterministic FSA \citep{hopcroftIntroductionAutomataTheory1979},
and every deterministic FSA can be \emph{minimized} into an equivalent deterministic FSA with the smallest number of states \citep{hopcroftIntroductionAutomataTheory1979}.

The \emph{union} of two FSAs $\mathcal A$ and $\mathcal B$ is a new FSA $\mathcal{A \cup B}$ that accepts any sequence that is accepted by either $\mathcal A$ or $\mathcal B$.
The \emph{concatenation} of two FSAs $\mathcal A$ and $\mathcal B$ is a new FSA $\mathcal{AB}$ that accepts any sequence that
can be split into two parts $x = y z$,
where $\mathcal A$ accepts the first part $y$ and $\mathcal B$ accepts the second part $z$.

\subsection{Push-Down Automaton (PDA)}
\label{bg:syntax}

A \emph{push-down automaton} (PDA) \citep{ahoTheoryParsingTranslation1972,hopcroftIntroductionAutomataTheory1979,caucalTransitionGraphsAutomata1991} $\mathcal P$ is defined by
the input alphabet $\Gamma$,
the stack alphabet $\Pi$,
the initial stack $\gamma_0 \in \Pi^2$,
the final states $F \subseteq \Pi$, 
and a finite set of transitions $\delta: \Pi^2 \times \Gamma_\varepsilon \to 2^{\Pi^+}$.
\footnote{Note that the definition of transitions here merges the states and the stack symbols in the traditional definition of PDA, but they are equivalent if we treat the stack top symbol as the state.}
If $\delta(\alpha, a) \ni \beta$,
we write $\alpha \myarrow{P}{a} \beta$.
If $\alpha \myarrow{P}{a} \beta$, for any $\gamma \in \Pi^*$, we also write $\alpha \gamma \myarrow{P}{a} \beta \gamma$.
We write $\myarrow*{}{}$ for consecutive transitions.
For $\gamma \in \Pi^+$, we write $\gamma \myarrow*{P}{\varepsilon} \gamma$.
If $\alpha \myarrow{P}{a} \beta$, $\beta \myarrow*{P}{w} \gamma$, we write $\alpha \myarrow*{P}{aw} \gamma$.

Informally, a PDA is \emph{deterministic}, if from any stack, for any given string, there is exactly one possible outcome.
$\mathcal P$ is \emph{deterministic} if, for any $\alpha \in \Pi^2$, either $(\left|\delta(\alpha, a)\right| \leq 1, \forall a \in \Gamma)$ and $\delta(\alpha, \varepsilon) = \varnothing$, or $(\delta(\alpha, a) = \varnothing, \forall a \in \Gamma)$ and $\left|\delta(\alpha, \varepsilon)\right| = 1$.

A deterministic PDA is \emph{terminating}, if for any stack, it does not make an endless sequence of $\varepsilon$-input transitions.\footnote{The definition is slightly different in the cited references; nevertheless, their proof works on this definition.}
Every deterministic PDA can be transformed into another equivalent deterministic terminating PDA \citep{sipserIntroductionTheoryComputation2013,hopcroftIntroductionAutomataTheory1979}.

We call a state $\beta = \beta_1 \dots \beta_n \in \Pi^+$ \emph{stable}
if $\beta_1$ is a final state, i.e. $\beta_1 \in F$, or the PDA is waiting to read one more symbol, i.e. $\exists a \in \Gamma, \delta(\beta_1 \beta_2, a) \neq \varnothing$.
If $\alpha \myarrow*{P}{w} \beta$ and $\beta$ is stable,
we also write $\alpha \myarrow*[\xRightarrow]{P}{w} \beta$.
In practice, parser in a stable state is ready to consume the next input symbol, or has reached an accepting stack.

We also write $w \in \mathcal P$ for $w \in \Gamma^*$ if $\exists X \in F, \gamma \in \Pi^*$, such that $\gamma_0 \myarrow*[\xRightarrow]{P}{w} X\gamma$.

\input{floats/symbols}

\subsection{Related Work}\label{sec:related}

There are several types of existing techniques to speed up grammar-constrained decoding: vocabulary preprocessing, lexer preprocessing, and parser preprocessing.

\textbf{Vocabulary preprocessing}~\citep{synchromesh,domino,llguidance} exploits the fact that the vocabulary is built by BPE~\citep{bpe0,bpe1}, and for each token, its prefix is also a token in the vocabulary.
If the prefix token is rejected, then the longer token must also be rejected.
So we can check the vocabulary hierarchically, and only check the tokens whose prefixes are not rejected.

\textbf{Lexer preprocessing}~\citep{domino,greatgramma,llguidance,syncode} maps each token to a terminal sequence during preprocessing, and then the parser is only called on the terminal sequences.
This reduces the number of parser calls, as different tokens may share the same terminal sequence.
The mask can be precomputed for each terminal sequence, combined at runtime to get the valid token mask.
Syncode~\citep{syncode} further approximates the terminal sequences by only considering the first 2 terminals of each token, removing the need for dynamic parsing using the lookaheads of the LR(1) parser at the cost of allowing certain invalid tokens to be accepted.

\textbf{Parser preprocessing}~\citep{xgrammar,greatgramma} classifies the vocabulary into three sets for each parser state: context-independent accepted, context-independent rejected, and context-dependent.
This allows us to reduce the number of parser calls by only checking the context-dependent tokens.

\method{} is a novel parser preprocessing technique that eliminates all context-dependent tokens.
As seen from the example in Figure~\ref{fig:demo:run}, the validity of tokens often depends on the deep parser stack structure, so existing techniques cannot pre-classify them into context-independent accepted or rejected sets, and thus require multiple parser calls per decoding step.
In constrast, we discover and prove that their acceptance conditions on the stack structure can be modeled with FSAs, so calling parsers are not needed at runtime.
By further pre-combining the FSAs, we only need to check the parser state once per decoding step, which is the theoretical minimum for any GCD method.

\input{floats/example-runtime}

%% file: floats/symbols.tex
\begin{table}[t]
\centering
\caption{Symbols and their meaning.\label{tab:symbols}}
\begin{tabular}{ll}
\toprule
Symbol & Meaning \\
\midrule
$\varepsilon$ & empty string \\
$ab$ & concatenation of strings $a$ and $b$ \\
$AB$ & concatenation of languages $A$ and $B$ \\
$A_\varepsilon$ & $A \cup \{\varepsilon\}$ \\
$A^*$ & Kleene star of language $A$ \\
$A^+$ & $A^* \setminus \{\varepsilon\}$ \\
\midrule
$\Sigma$ & the character set (usually the Unicode) used by the language model \\
$\Gamma$ & the terminal set of the grammar \\
$\mathcal{V}$ & \makecell[l]{the vocabulary of the language model, a finite subset of $\Sigma^+$\\\hspace{1em} such that every string over $\Sigma$ can be tokenized as a string over $\mathcal{V}$} \\
$\mathcal{T}$ & the lexing FST, transduces string over $\Sigma$ to terminal sequence over $\Gamma$ \\
$Q$ & the finite set of states of the lexing FST $\mathcal T$ \\
$\mathcal{P}$ & the parsing PDA, accepts terminal sequences in the language \\
$\Pi$ & the stack alphabet of the PDA \\
\midrule
$p \myarrow*[\xRightarrow]{T}{v:x} q$ & \makecell[l]{successful transition of the lexing FST $\mathcal T$ from state $p \in Q$ to \emph{stable} state $q \in Q$\\\hspace{1em} when reading token $v \in \mathcal V$ and outputting terminal sequence $x \in \Gamma^*$} \\
$\alpha \myarrow*[\xRightarrow]{P}{w} \beta$ & \makecell[l]{successful transition of the parsing PDA $\mathcal P$ from stack $\alpha \in \Pi^*$ to \emph{stable} stack $\beta \in \Pi^*$\\\hspace{1em} when reading terminal sequence $w \in \Gamma^*$} \\
\bottomrule
\end{tabular}
\end{table}

%% file: floats/example-runtime.tex
\begin{figure}[t]
\begin{subfigure}{\textwidth}
\centering
\begin{minipage}{.47\textwidth}
\begin{lstlisting}[language=java, linewidth=\textwidth]
public class C {
  public static int gcd(int x, int y) {
    if (y == 0) return x!\big|!
\end{lstlisting}
\end{minipage}\hfill%
\begin{tabular}{c|c|c|c|c|c|c|c|c|}
    \multicolumn{9}{l}{$\xleftarrow{\text{top of the stack}}$}\\
    \hline
    2 & 79 & 540 & 483 & 326 & 406 & 574 & 11 & $\dots$\\
    \hline
  \end{tabular}
\caption{Example prefix and corresponding parser stack.} \label{fig:demo:prefix}
\end{subfigure}\\
  \begin{subfigure}{.47\textwidth}
    \includegraphics*[width=\textwidth]{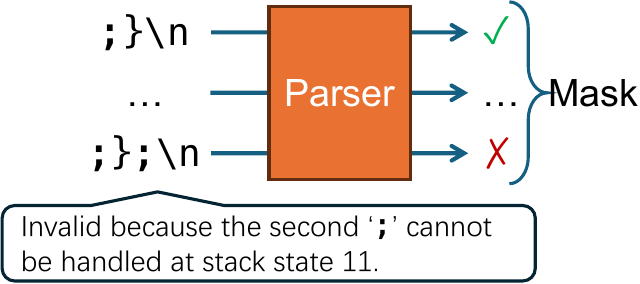}
    \caption{How existing GCD methods handle the example prefix.
    Because the validity of context-dependent tokens depends on the deep structure of the parser stack, they need to run the parser at runtime.}\label{fig:demo:stack}
  \end{subfigure}
\hfill%
\begin{subfigure}{.51\textwidth}
\usetikzlibrary {automata}
\begin{tikzpicture}[node distance=0.6cm, state/.style={circle, draw, minimum size=0.6cm}, font=\small]
  \node[state, initial] (0) {};
  \node[state] (2) [right=of 0] {};
  \node[state] (79) [right=of 2] {};
  \node[state] (540) [right=of 79] {};
  \node[state] (483) [below=of 540] {};
  \node[state] (326) [left=of 483] {};
  \node[state] (574) [above=of 79] {};
  \node[state, draw=none] (dots) [right=of 574] {$\dots$};
  \node[state, draw=none] (mask) [right=of dots] {mask};
  \path[->] (0) edge node [above] {2} (2)
    (2) edge node [above] {79} (79)
    (79) edge node [above] {540} (540)
    (540) edge node [right] {483} (483)
    (483) edge node [above] {326} (326)
    (326) edge node [left] {406} (79)
    (79) edge node [left] {574} (574)
    (574) edge node [above] {11} (dots)
    (dots) edge (mask);
\end{tikzpicture}
\caption{\method{} analyses the exact conditions of the PDA to accept each token in preprocessing, so it can directly get the valid token mask by classifying the parser stack at runtime without repeated simulation.
  Only parts relevent to the example are shown.} \label{fig:demo:run:ours}
\end{subfigure}
\caption{An example of how \method{} works during runtime decoding, compared with a naive GCD implementation.
\method{}'s improvement comes from exact analysis of the parser stack.}\label{fig:demo:run}
\end{figure}

%% file: chapters/method.tex
\section{\method{}: Parser Stack Classification} \label{sec:method}

The preprocessing of the lexer $\mathcal T$ is described in Section~\ref{sec:lex_pre}, and the other parts of this section are dedicated to the preprocessing of the parser $\mathcal P$.

\subsection{Lexical Preprocessing}\label{sec:lex_pre}

Lexical preprocessing is not our focus in this paper,
so we reuse the lexical preprocessing in GreatGramma~\citep{greatgramma},
and conclude it here as a prelude to \method{}.

For token $v \in \mathcal V$, lexer state $q \in Q$, if lexing the token $v$ from state $q$ using the lexer $\mathcal T$ generates the terminal sequence $x \in \Gamma^*$, and the lexer transits to a stable state $p$, i.e. $q \myarrow*[\xRightarrow]{T}{v:x} p$, we define the set of \emph{realizable terminal sequences} $R_q(v)$ as $x T_p$, representing all possible terminal prefixes that can be generated from a string starting with $v$,
where $T_p$ is the finite set of all possible terminal prefixes from state $p$: $\mathcal T_p(\Sigma^*) = T_p \mathcal T(\Sigma^*)$.
$R_q(v)$ is computed for all $q \in Q, v \in \mathcal V$ during preprocessing.

\subsection{Overview of Syntactic Preprocessing}\label{sec:overview}

Given a valid prefix $x \in \Sigma^*$, we run the lexer $\mathcal T$ from its initial state $q_0$ to produce a terminal sequence $z \in \Gamma^*$ and another stable lexer state $q$, i.e. $q_0 \myarrow*[\xRightarrow]{T}{x:z} q$.
We then run the parser $\mathcal P$ from its initial stack $\gamma_0$ to receive the terminal sequence $z$, generating another stable stack $\alpha$, i.e. $\gamma_0 \myarrow*[\xRightarrow]{P}{z} \alpha$.

We now introduce the simplification of the condition in the GCD definition in Equations~\ref{eq:cons} and~\ref{eq:in-lang-mean} \textbf{from previous work~\citep{greatgramma}}.
For any token $v \in \mathcal V$, to determine whether $v$ is valid, we can rewrite the condition in terms of realizable terminal sequences,
\begin{equation}
    \exists y \in \Sigma^*, \mathcal T(xvy) \in \mathcal P
    \iff \exists w \in R_q(v), \exists \beta \in \Pi^+, \alpha \myarrow*[\xRightarrow]{P}{w} \beta, \label{eq:simp}
\end{equation}
which means if the parser can reach a stable stack $\beta$ from the current stack $\alpha$ by reading in any realizable terminal sequence $w$ from $R_q(v)$.
The simplification is based on the common assumption that,
if the parser reads in a certain terminal sequence and enters a stable stack,
then we do not need to worry about the rest of the input, and
there always exists a terminal sequence produced by the lexer that can ensure the whole text is accepted.

For $w \in \Gamma^*$, we define $P_w(\alpha)$ for the calculation in the last step of Equation~\ref{eq:simp},
\begin{equation}
    P_{w \in \Gamma^*}(\alpha \in \Pi^+) := \left\{ \beta \in \Pi^+ \middle| \alpha \myarrow*[\xRightarrow]{P}{w} \beta \right\}. \label{eq:ind}
\end{equation}
\textbf{How to efficiently calculate $P_w(\alpha)$ is the key difference between \method{} and previous methods.}
In existing work, the calculation of $P_w$ is almost always dynamic:
one has to calculate $P_w(\alpha)$ for the current $\alpha$ and every possible $w \in R_q(\mathcal V)$.
While existing methods in Section~\ref{sec:related} employ precomputation to optimize certain cases, they still fundamentally require worst-case $\mathcal O(|R_q(\mathcal V)|)$ time for dynamic parsing if correctness is not sacrificed.

In this work, \method{} proposes a totally different approach, modeling $P_w$ as a deterministic finite-state transducer (FST),
which reads the stack sequence $\alpha \in \Pi^+$, and then outputs the sequence $\beta \in \Pi^+$ if the parser $\mathcal P$ can reach $\beta$ from $\alpha$ by reading $w$, or rejects the input stack otherwise.
In other words, the behavior of the parser $\mathcal P$ on any stack $\alpha$ by reading $w$ is fully encoded in the FST $P_w$.

This gives us several benefits.
\begin{enumerate*}
\item Because each $P_w$ reads and outputs a sequence of stack symbols, they can be composed to create larger FSTs: $P_t \circ P_s = P_{st}$.
\item Because the realizable terminal sequences are known during precomputation, the exact validity condition of each vocabulary is therefore known, their combinations can be precomputed, and we only need to go through the current stack once during runtime.
\item All possible masks can also be precomputed, eliminating the mask generation overhead during decoding.
\end{enumerate*}

The challenge here is whether and how each $P_w$ can be constructed as a deterministic FST.
This is not straightforward because of the presence of $\varepsilon$-transitions in the PDA $\mathcal P$.
To address this, we first construct $P_\varepsilon$ to handle all $\varepsilon$-transitions, and then construct $P_w$ for any $w \in \Gamma^*$ based on $P_\varepsilon$.

\subsection{FST of $\varepsilon$ Transitions}

In this section, we construct the FST $P_\varepsilon$.
Its input should be a stack, and \textbf{the output is its stabilized version, by repeatedly executing all needed $\varepsilon$ transitions on the stack}.
The start state is $\varepsilon$, and the final state is a special state \textsc{Final}.
The set of all states is the minimum closure of the transitions defined below, where each state represents the current known stack top.
\begin{subequations}\begin{align}
\forall X \in \Pi, &&\forall \alpha \in \Pi^*, &&\alpha &\myarrow<>{P_\varepsilon}{X:\varepsilon} \alpha X, &&\text{if } \left|\alpha \right| < 2 \text{ and } \alpha_{[0]} \notin F_{\mathcal P}; \label{eq:fst:epsilon:readup}\\
&&\forall \alpha \in \Pi^*, && \alpha &\myarrow<>{P_\varepsilon}{\varepsilon:\alpha} \mathsc{Final}, &&\text{if } \alpha_{[:2]} \myarrow{P}{a} \beta, \exists a \in \Gamma \text{ or } \alpha_{[0]} \in F_{\mathcal P}; \label{eq:fst:epsilon:done}\\
&&\forall \alpha \in \Pi^*, && \alpha &\myarrow<>{P_\varepsilon}{\varepsilon:\varepsilon} \beta\alpha_{[2:]}, &&\text{if } \alpha_{[:2]} \myarrow{P}{\varepsilon} \beta; \label{eq:fst:epsilon:execute}\\
\forall X \in \Pi, &&&&\mathsc{Final} &\myarrow<>{P_\varepsilon}{X:X} \mathsc{Final}. \label{eq:fst:epsilon:final}
\end{align}\label{eq:fst:epsilon}\end{subequations}
There are four types of transitions in $P_\varepsilon$.
\begin{enumerate*}
\item[\eqref{eq:fst:epsilon:readup}] If one cannot determine whether the stack is stable from the stack top $\alpha$, it transits to a new state by reading the next stack symbol.
\item[\eqref{eq:fst:epsilon:done}] If the stack top $\alpha$ is stable, it transits to the \textsc{Final} state to output the final stable stack.
\item[\eqref{eq:fst:epsilon:execute}] Otherwise, it simulates the transition of $\mathcal P$ on the current stack top $\alpha$, and transits to a new state representing the new stack top after executing the $\varepsilon$ transition.
\item[\eqref{eq:fst:epsilon:final}] In the \textsc{Final} state, it always outputs the input stack unchanged.
\end{enumerate*}
We have the following theorem regarding the correctness of the construction above.

\begin{theorem}\label{thm:epsilon}
Equations~\ref{eq:fst:epsilon} constructs a finite-state transducer $P_\varepsilon$ as defined in Equation~\ref{eq:ind}.
\end{theorem}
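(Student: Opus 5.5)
The proof has two parts. First, the state set generated by Equations~\ref{eq:fst:epsilon} is finite, so $P_\varepsilon$ really is a finite-state transducer. Second, for every $\alpha\in\Pi^+$ the transducer accepts $\alpha$ with output $\beta$ exactly when $\beta\in P_\varepsilon(\alpha)$ as defined in Equation~\ref{eq:ind}. Throughout, stacks are read top first, and a state $\alpha$ of the FST means that $\alpha$ is the already-read stack top, modified by the $\varepsilon$-moves simulated so far, while the unread input is the untouched remainder of the stack.

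\textbf{Correctness invariant.} I would prove by induction on the length of an FST run the following statement. If the FST is in state $\sigma\neq\mathsc{Final}$ after consuming input $u$ with no output yet emitted, then $u \myarrow*{P}{\varepsilon} \sigma$ in $\mathcal P$. Equivalently, $u\gamma \myarrow*{P}{\varepsilon} \sigma\gamma$ for every remaining stack $\gamma$, since transitions of $\mathcal P$ only inspect the top two symbols and extend to arbitrary suffixes.
\begin{itemize}
\item Rule~\eqref{eq:fst:epsilon:readup} extends $u$ and $\sigma$ by the same symbol, so the invariant is preserved.
\item Rule~\eqref{eq:fst:epsilon:execute} mirrors one $\mathcal P$-step on $\sigma_{[:2]}$, so the invariant is preserved.
\item Rule~\eqref{eq:fst:epsilon:done} fires only when $\sigma$ is stable. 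After it, rule~\eqref{eq:fst:epsilon:final} copies $\gamma$, so the output $\sigma\gamma$ satisfies $\alpha=u\gamma\myarrow*[\xRightarrow]{P}{\varepsilon}\sigma\gamma$.
\end{itemize}
This gives soundness.

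\textbf{Completeness.} Since $\mathcal P$ is deterministic, at any non-stable top exactly one $\varepsilon$-move applies. I would show that the FST tracks the unique $\mathcal P$-run from $\alpha$: whenever $|\sigma|\ge2$ (or $\sigma_{[0]}\in F$), exactly one of rules \eqref{eq:fst:epsilon:done} and \eqref{eq:fst:epsilon:execute} applies; otherwise rule \eqref{eq:fst:epsilon:readup} reads more input. Because $\mathcal P$ is terminating, the simulation reaches a stable stack, so the FST emits exactly that stack. This also shows that the FST is deterministic, and that it rejects (gets stuck) exactly when $P_\varepsilon(\alpha)=\varnothing$, for example when the input stack runs out.

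\textbf{Finiteness (the main obstacle).} Reachable states $\sigma$ can in principle grow, because rule~\eqref{eq:fst:epsilon:execute} replaces two symbols by a $\beta$ of arbitrary length. I would argue as follows.
\begin{itemize}
\item Every state is produced from a state of length at most $2$ by a chain of $\varepsilon$-moves of $\mathcal P$. Readup only occurs at length $<2$, and the set of such short stacks is finite.
\item Each such chain is finite because $\mathcal P$ is terminating, and since $\mathcal P$ is deterministic the chain from a given start is unique.
\item Hence the states reachable without a readup from a fixed short start form a single finite path.
\item There are at most $1+|\Pi|+|\Pi|^2$ short starts, so the closure is finite.
\end{itemize}
The subtle point is that termination must apply to the bare short stack $\sigma$ viewed as a stack on its own. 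A run from $\sigma$ that would pop below it corresponds to the FST reaching length $<2$ and reading up, which starts a new segment. So I would cut paths at readups and apply termination to each segment. I expect this bookkeeping, together with making ``terminating'' strong enough for these partial stacks, to need the most care.
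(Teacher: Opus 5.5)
Your proposal is correct and follows the paper's proof. Finiteness holds because rule~\eqref{eq:fst:epsilon:readup} only creates states of length at most $2$, and every chain of rule~\eqref{eq:fst:epsilon:execute} is finite since $\mathcal P$ is terminating; the paper's definition of terminating quantifies over every stack, so it applies directly to the bare short stacks you worry about. Correctness holds because the transducer simply simulates $\mathcal P$ on the known stack top, which you make explicit through your soundness invariant and completeness argument.

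One caveat, which the paper's ``correct by construction'' remark shares: three of your claims tacitly assume that no top $XY$ with $X \in F_{\mathcal P}$ admits an $\varepsilon$-move. These are that exactly one of rules~\eqref{eq:fst:epsilon:done} and~\eqref{eq:fst:epsilon:execute} fires, that the transducer is deterministic, and that it emits a unique stable stack. The stated definition of determinism does not exclude such moves. Without that assumption, $P_\varepsilon(\alpha)$ from Equation~\ref{eq:ind} can contain several stable stacks, and because readup is blocked at a length-$1$ state whose symbol lies in $F_{\mathcal P}$, the construction misses the later ones.
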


\begin{proof}
First we show that the states of $P_\varepsilon$ are finite, i.e., starting from the initial state $\varepsilon$, only a finite number of new states are needed to construct a closure of all transitions.
Consider the two equations that introduce new states: Equation~\ref{eq:fst:epsilon:readup} and Equation~\ref{eq:fst:epsilon:execute}.
\begin{enumerate*}
\item[\eqref{eq:fst:epsilon:readup}] The length of its right-hand side is at most $2$, so only a finite number of states are needed here.
\item[\eqref{eq:fst:epsilon:execute}] This is the only place that can introduce new states without length limit.
However, because the parser $\mathcal P$ is terminating, by definition in Section~\ref{bg:syntax}, for any stack configuration, there will not be an endless sequence of $\varepsilon$ transitions,
so this process can only introduce a finite number of new states before reaching a stable stack and transiting to \textsc{Final}.
\end{enumerate*}

The correctness of Equations~\ref{eq:fst:epsilon} can be naturally deduced by its construction, because it just simulates the behavior of the parser $\mathcal P$ with the current stack top, and only outputs stable stacks.
\end{proof}

$P_\varepsilon$ is an important building block in the construction of other $P_w, w \in \Gamma^+$.
For any stack $\alpha$, $P_\varepsilon(\alpha)$ gives the stabilized version of $\alpha$,
so the FST composed after $P_\varepsilon$ does not need to handle $\varepsilon$ transitions,
and we can compose $P_\varepsilon$ after other FSTs to meet the stability requirement in the definition of $P_w$.

\subsection{FST for Any Terminal Sequence}

After constructing $P_\varepsilon$, the construction of $P_w$ for any terminal sequence $w \in \Gamma^+$ is direct.
We first construct an FST $\tilde P_a$ for every $a \in \Gamma$ that \textbf{simulates a single transition labeled $a$}, i.e. outputting $\beta$ for the input stack $\alpha$ if $\alpha \myarrow{P}{a} \beta$, i.e.,
\begin{equation}
\tilde P_a(\alpha \in \Pi^+) := \left\{ \beta \in \Pi^+ \middle| \alpha \myarrow{P}{a} \beta \right\}.\label{eq:single}
\end{equation}
Note that the output stack is not required to be stable, so we add a tilde above $P$ to indicate this difference from $P_w$.
The start state is $\varepsilon$,
the final state is $\textsc{Final}$,
and transitions are defined as follows.
\begin{align}
  \varepsilon &\myarrow<>{\tilde P_a}{X:\varepsilon} X \myarrow<>{\tilde P_a}{Y:\varepsilon} XY \myarrow<>{\tilde P_a}{\varepsilon:\beta} \textsc{Final}, \forall XY \myarrow{P}{a} \beta; &\textsc{Final} &\myarrow<>{\tilde P_a}{X:X} \mathsc{Final}, \forall X \in \Pi. \label{eq:fst:single}
\end{align}
For any terminal sequence $w = w_1 \dots w_n \in \Gamma^+$, the FST $P_w$ can be constructed as follows. 
\begin{equation}
  P_w = P_\varepsilon \circ \tilde P_{w_1} \circ P_\varepsilon \circ \dots \circ P_\varepsilon \circ \tilde P_{w_n} \circ P_\varepsilon.\label{eq:compose}
\end{equation}
Intuitively, the input stack $\alpha$, is first passed to $P_\varepsilon$ to get a stable stack,
and then passed to $\tilde P_{w_1}$ to get the stack after reading $w_1$,
and then passed to $P_\varepsilon$ to get the stabilized version, etc,
until it is passed to $\tilde P_{w_n}$ and stabilized with $P_\varepsilon$.

We have the following theorem regarding the correctness of the construction above.

\begin{theorem}\label{thm:terminal_sequence}
The above construction of $P_w$ meets the definition in Equation~\ref{eq:ind}.
\end{theorem}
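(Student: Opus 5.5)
We argue by induction on $n = |w|$. The core claim is that the composition in Equation~\ref{eq:compose} computes exactly $P_w(\alpha) = \{\beta \mid \alpha \myarrow*[\xRightarrow]{P}{w} \beta\}$.

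The proof rests on two ingredients. The first is Theorem~\ref{thm:epsilon}: $P_\varepsilon(\alpha)$ is the set of stable stacks reachable from $\alpha$ by $\varepsilon$-transitions alone. Since $\mathcal P$ is deterministic and terminating, this set is exactly the singleton $\{\mathrm{stab}(\alpha)\}$.

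The second ingredient is a lemma that $\tilde P_a$ in Equation~\ref{eq:fst:single} realizes Equation~\ref{eq:single}. To prove it, observe that on input $\alpha = XY\gamma$ the FST reads $X$ and $Y$ with empty output. It then emits $\beta$ for each rule $XY \myarrow{P}{a} \beta$, moves to \textsc{Final}, and copies $\gamma$ unchanged. The total output is $\beta\gamma$, which matches the extension convention $\alpha\gamma \myarrow{P}{a} \beta\gamma$. Stacks of length below $2$ have no accepting run. These match the PDA, since every transition needs two stack symbols.

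Next we use composition semantics: $(\mathcal S \circ \mathcal T)(\alpha) = \bigcup_{\beta \in \mathcal S(\alpha)} \mathcal T(\beta)$. We then unfold the relation $\alpha \myarrow*[\xRightarrow]{P}{w} \beta$ for $w = w_1\dots w_n$. Any run of $\mathcal P$ reading $w$ decomposes uniquely as
\[
\alpha \myarrow*{P}{\varepsilon} \alpha_0 \myarrow{P}{w_1} \alpha_1' \myarrow*{P}{\varepsilon} \alpha_1 \myarrow{P}{w_2} \cdots \myarrow{P}{w_n} \alpha_n' \myarrow*{P}{\varepsilon} \beta,
\]
where each $\alpha_{i-1}$ immediately before consuming $w_i$ must be a stack on which a $w_i$-transition is enabled. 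Determinism implies that no $\varepsilon$-transition is enabled there, so $\alpha_{i-1}$ is stable. Hence each $\alpha_{i-1}$ is exactly the $\varepsilon$-closure endpoint returned by $P_\varepsilon$. Determinism also guarantees that the $\varepsilon$-run from any stack is unique, so $P_\varepsilon$ outputs the only possible intermediate stack.

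The induction then runs as follows. For the base case, $P_\varepsilon$ handles the empty prefix by Theorem~\ref{thm:epsilon}. For the step, assume $P_\varepsilon \circ \cdots \circ \tilde P_{w_{k}} \circ P_\varepsilon$ yields the stable stacks reachable by $w_1\dots w_k$. Applying $\tilde P_{w_{k+1}}$ gives the stacks one $w_{k+1}$-move later, and applying $P_\varepsilon$ stabilizes them. Concatenating the runs gives one inclusion, and the decomposition above gives the converse.

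The main obstacle is the subtle point that $P_\varepsilon$ outputs a stack that is stable in the sense of Section~\ref{bg:syntax}. Such a stack may be stable only because its top lies in $F$, even though no $w_{k+1}$-move is possible from it. In that case $\tilde P_{w_{k+1}}$ simply has no run, which agrees with the PDA being unable to read $w_{k+1}$. The inclusion from the PDA side needs the stability of $\alpha_{i-1}$, and this follows because a $w_i$-transition is enabled there. Checking these corner cases, together with the determinism clause forbidding mixed $\varepsilon$ and input transitions, completes both inclusions.
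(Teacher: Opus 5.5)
Your proof is correct and follows essentially the same route as the paper's: you decompose the run of $\mathcal P$ on $w$ into $\varepsilon$-segments handled by $P_\varepsilon$ and single terminal moves handled by $\tilde P_{w_i}$, and your observation that each intermediate stack is stable because a $w_i$-move is enabled there fills in what the paper's sketch leaves implicit. One small slip: $P_\varepsilon(\alpha)$ need not be a singleton. It is empty if the $\varepsilon$-run gets stuck at an unstable stack, and it can have several elements if a stack whose top lies in $F$ still has an enabled $\varepsilon$-move. However, your two inclusions are argued with set semantics and never actually depend on the singleton claim.
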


\begin{proof}
The construction of $\tilde P_a$ in Equations~\ref{eq:fst:single} simulates one $a$-labelled transition of the parser $\mathcal P$ on the current stack top, so it meets Equation~\ref{eq:single}.

The process of the parser processing $w = w_1 \dots w_n$ can be decomposed into a sequence of unconditional $\varepsilon$-transitions, followed by $w_i$-labelled transitions, followed by another sequence of unconditional $\varepsilon$-transitions. Each $w_i$-labelled transition is simulated by the corresponding $\tilde P_{w_i}$, and the unconditional $\varepsilon$-transitions are handled by $P_\varepsilon$.
Therefore, the composition in Equation~\ref{eq:compose} correctly simulates the parser $\mathcal P$ processing the terminal sequence $w$ on the input stack $\alpha$, and produces the stabilized output stack $\beta$ if it exists.
\end{proof}

When calculating the mask, we only care about whether $P_w(\alpha) \neq \varnothing$.
Removing all the output labels from $P_w$ gives us a finite-state automaton, hereafter named $A_w$.

\subsection{One-Pass FSA for Mask Selection}

After constructing $P_w$ for all realizable terminal sequences $w \in R(\mathcal V)$,
we can now consider simplifying the constraint calculation over the whole vocabulary $\mathcal V$.
Recall Equation~\ref{eq:cons}, combined with Equation~\ref{eq:simp} and $A_w$, we have the following equation,
\begin{equation*}
    c(x \in \Sigma^*) = \left\{ v \in \mathcal V \middle| \exists w \in R_q(v), P_w(\alpha) \neq \varnothing\right\} = \left\{ v \in \mathcal V \middle| \exists w \in R_q(v), \alpha \in A_w\right\},
\end{equation*}
where $q$ and $\alpha$ as defined in Section~\ref{sec:overview} are only dependent on $x$.

In $c(x)$, we want to know whether any of the $A_w$ accepts $\alpha$, where $w \in R_q(v)$.
This can be achieved by constructing the union of different $A_w$, i.e., $\bigcup_{w \in R_q(v)} A_w$.

But to get the whole mask $c(x)$, we need to check for every $v \in \mathcal V$, whether $\alpha$ is accepted by $\bigcup_{w \in R_q(v)} A_w$, which is inefficient.
To address this issue, we can integrate the checking of $v$ and $q$ into the FSA.
Introduce the notation $\oncefsa{a}$ for an FSA that accepts only $a$ once.
For every $q \in Q$ and $v \in \mathcal V$, we can concatenate $\oncefsa{q}$ before $\bigcup_{w \in R_q(v)} A_w$ to check whether the current state is $q$, and then concatenate $\oncefsa{v}$ after $\bigcup_{w \in R_q(v)} A_w$ to check whether the candidate token is $v$.

By unioning the results for all $v \in \mathcal V$ and $q \in Q$, we construct an FSA $\mathcal A$ that accepts the sequence $q \alpha v$ only if $v$ is a valid token for the lexer state $q$ and stack $\alpha$,
\begin{align}
    \mathcal A &:= \bigcup_{v \in \mathcal V} \bigcup_{q \in Q} \bigcup_{w \in R_q(v)} \oncefsa{q} A_w \oncefsa{v}, &
    c(x) &= \left\{ v \in \mathcal V \middle| q \alpha v \in \mathcal A \right\}, \label{eq:final_c}
\end{align}
where $\mathcal A$ should be determinized and minimized.
This gives us the following theorem.

\begin{theorem}\label{thm:regular}
All (lexer state, parser stack) pairs that accept a given token form a regular language.
\end{theorem}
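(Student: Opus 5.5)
For a fixed token $v \in \mathcal V$, we show that the set
\[
L_v := \left\{ q\alpha \;\middle|\; q \in Q,\ \alpha \in \Pi^+,\ \exists w \in R_q(v),\ \alpha \in A_w \right\}
\]
is accepted by a finite-state automaton. By the simplification in Equation~\ref{eq:simp}, this is exactly the set of (lexer state, parser stack) pairs that accept $v$. The plan is to build this automaton from the pieces already constructed: the automata $A_w$, the one-symbol automata $\oncefsa{q}$, and closure of FSAs under finite union and concatenation.

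\textbf{Step 1: each $A_w$ is a finite automaton with the right language.} For every $w \in \Gamma^*$, Theorem~\ref{thm:epsilon} and Theorem~\ref{thm:terminal_sequence} show that $P_w$ is a finite-state transducer with $P_w(\alpha) = \left\{ \beta \middle| \alpha \myarrow*[\xRightarrow]{P}{w} \beta \right\}$. The composition of finitely many FSTs is again an FST, and $P_w$ is such a composition by Equation~\ref{eq:compose}. Erasing the output labels of an FST gives an FSA with the same accepted inputs. Hence $A_w$ is an FSA with $\alpha \in A_w \iff P_w(\alpha) \neq \varnothing$.

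\textbf{Step 2: only finitely many $w$ occur, so the union is finite.} The union over $w \in R_q(v)$ is regular only if it ranges over finitely many languages, so we must check that $R_q(v)$ is finite. By Section~\ref{sec:lex_pre}, $R_q(v) = xT_p$ when $q \myarrow*[\xRightarrow]{T}{v:x} p$, and is empty if no such run exists. Since $\mathcal T$ is deterministic, $x$ and $p$ are unique, and $T_p$ is finite by definition, so $R_q(v)$ is finite. The state set $Q$ is also finite. Therefore
\[
L_v = \bigcup_{q \in Q} \bigcup_{w \in R_q(v)} \oncefsa{q} A_w
\]
is a finite union of concatenations of FSAs, and hence is regular by the closure properties recalled in the background.

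\textbf{Step 3: the combined automaton computes the mask.} Reading $Q \cup \Pi$ as a single alphabet, we show that $q\alpha v \in \mathcal A$ holds iff $q\alpha \in L_v$, where $\mathcal A$ is the automaton of Equation~\ref{eq:final_c}. The key point is that $q$ appears only in the first position and $v$ only in the last. If $Q$, $\Pi$ and $\mathcal V$ are not already disjoint, we take them as disjoint copies. With disjoint alphabets, any accepting decomposition of $q\alpha v$ must use $\oncefsa{q}$ on $q$, some $A_w$ on $\alpha$, and $\oncefsa{v}$ on $v$, so the equivalence follows. Determinization and minimization do not change the accepted language, so this also justifies Equation~\ref{eq:final_c}.

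\textbf{Main obstacle.} The step most likely to be delicate is Step 1: the regularity of each $A_w$ rests entirely on $P_\varepsilon$ having finitely many states. The argument in Theorem~\ref{thm:epsilon} appeals to termination, but a terminating PDA can still grow the stack-top configuration without bound over different inputs. If that gap has to be closed, I would first try to show that each state of $P_\varepsilon$ reachable through transitions~\eqref{eq:fst:epsilon:execute} is the result of an $\varepsilon$-run starting from one of the finitely many length-$\le 2$ tops. Termination bounds each such run, so each top yields only finitely many states, and finitely many tops then give finitely many states in total. With that settled, the remaining steps are routine closure arguments.
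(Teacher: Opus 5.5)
Your proof is correct, but it reaches the per-token language by a somewhat different route than the paper.

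The paper starts from the global automaton $\mathcal A$ of Equation~\eqref{eq:final_c} and notes that it is an FSA. It then extracts the language for a fixed $v$ by reversing, taking the Brzozowski derivative with respect to $v$, and reversing back. This is in effect a right quotient by the single symbol $v$, justified by closure of regular languages under these operations.

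You never form the union over $v$ or the factor $\oncefsa{v}$. Instead you build $L_v=\bigcup_{q\in Q}\bigcup_{w\in R_q(v)}\oncefsa{q}A_w$ directly, so you only need closure under finite union and concatenation. Along the way you make explicit two facts the paper uses silently:
\begin{itemize}
\item $R_q(v)=xT_p$ is finite; without this, even $\mathcal A$ would not be a finite union.
\item An accepting run of $\mathcal A$ on $q\alpha v$ must read $q$, $\alpha$, $v$ through $\oncefsa{q}$, some $A_w$, and $\oncefsa{v}$ respectively. This is what identifies the paper's quotient with your $L_v$. Since each $\oncefsa{\cdot}$ consumes exactly one symbol, your disjoint-copies device is harmless but not strictly needed.
\end{itemize}
The paper's route buys the observation that the single runtime DFA already encodes every per-token language. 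Your Step~3 recovers that link separately, so nothing is lost.

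Your fix for the ``main obstacle'' is also sound. Every state introduced by \eqref{eq:fst:epsilon:execute} lies on a deterministic chain that begins at one of the finitely many length-$2$ tops produced by \eqref{eq:fst:epsilon:readup}. That chain coincides with the PDA's $\varepsilon$-run on that two-symbol stack until it stabilizes, gets stuck, or drops below length $2$. Termination therefore makes each chain finite. This is simply the precise form of the finiteness argument in the proof of Theorem~\ref{thm:epsilon}.
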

\begin{proof}
Because $\mathcal A$ is constructed as a FSA, the language recognized by $\mathcal A$ is regular.
The language of all valid (lexer state, parser stack) pairs for a given token $v \in \mathcal V$ can be obtained by reversing the language recognized by $\mathcal A$, taking the Brzozowski derivative~\citep{brzozowskiderivative} with respect to $v$, and then reversing it back.
Since the class of regular languages is closed under these operations~\citep{hopcroftIntroductionAutomataTheory1979}, the resulting language is also regular.
\end{proof}

In Equation~\ref{eq:final_c}, we can precompute all possible result of $c$, i.e. all possible vocabulary masks, by considering acceptable vocabulary set $\mathcal A_s := \left\{v \in \mathcal V \middle| s \myarrow{A}{v} f^\mathcal A \right\}$ for every state $s$ in $\mathcal A$ where $f^\mathcal A$ is the final state of $\mathcal A$.

We summarize the offline construction process of \method{} in Algorithm~\ref{alg:offline}, and the online execution process in Algorithm~\ref{alg:online}.
In Algorithm~\ref{alg:online}, both the lexing step~\ref{step:lex} and the parsing step~\ref{step:parse} are standard in grammar-constrained decoding, and can be incrementally maintained.
In Step~\ref{step:fsa}, the FSA $\mathcal A$ is run on the stack $\alpha$ and the lexer state $q$ to get the state $s$,
only requiring $\mathcal O(\left|\alpha\right|)$ time.
Step~\ref{step:mask} can be precomputed to be $\mathcal O(1)$ at runtime.

\begin{figure}[ht]
\begin{minipage}{0.5\textwidth}
\input{floats/offline}
\end{minipage}\hfill
\begin{minipage}{0.48\textwidth}
\vfill
\input{floats/summary}
\vfill
\end{minipage}
\end{figure}

%% file: floats/offline.tex
\begin{algorithm}[H]
\caption{Offline constructon in \method{}}\label{alg:offline}
\begin{algorithmic}[1]

\Function{OfflineConstruction}{$\mathcal T, \mathcal P, \mathcal V$}

\State $P_\varepsilon \gets \Call{EpsilonFST}{\mathcal P}$
\ForAll{$a \in \Gamma$}
  \State $\tilde P_a \gets \Call{TerminalFST}{\mathcal P, a}$
\EndFor
\ForAll{$w = w_1 \dots w_n \in R(\mathcal V)$}
  \State $P_w \gets P_\varepsilon \circ \tilde P_{w_n} \circ P_\varepsilon \circ \cdots \circ P_\varepsilon \circ \tilde P_{w_1} \circ P_\varepsilon$
  \State $A_w \gets \Call{RemoveOutput}{P_w}$
\EndFor
\State $\mathcal A \gets \bigcup_{v \in \mathcal V} \bigcup_{q \in Q} \bigcup_{w \in R_q(v)} \oncefsa{q} A_w \oncefsa{v}$
\State $\mathcal A \gets \Call{Minimize}{\mathcal A}$
\State \Return $\mathcal A$

\EndFunction

\end{algorithmic}
\end{algorithm}

%% file: floats/summary.tex
\begin{algorithm}[H]
\caption{Online execution of \method{}}\label{alg:online}
\begin{algorithmic}[1]

\Function{OnlineExecution}{$\mathcal T, \mathcal P, \mathcal A, x$}

\State $q_0^\mathcal T \myarrow*[\xRightarrow]{T}{x:z} q$ \label{step:lex}
\State $\gamma_0 \myarrow*[\xRightarrow]{P}{z} \alpha$ \label{step:parse}
\State $q_0^\mathcal A \myarrow*[\xRightarrow]{A}{q \alpha} s$ \label{step:fsa}
\State \Return $\mathcal A_s$ \label{step:mask}

\EndFunction

\end{algorithmic}
\end{algorithm}

%% file: chapters/experiment.tex
\section{Experiments}\label{sec:exp}

In this section, we conduct experiments to answer the following research questions:
\begin{enumerate}[label=(RQ\arabic*),ref=RQ\arabic*]
  \item Does \method{} produce \textbf{correct} masks for grammar-constrained decoding? \label{rq:correct} 
  \item How much \textbf{online overhead} does \method{} introduce? \label{rq:mask_overhead}
  \item How does \method{} affect the \textbf{end-to-end decoding throughput}? \label{rq:throughput}
  \item Does \method{} improve the performance on \textbf{downstream tasks}? \label{rq:downstream}
\end{enumerate}

We design the experiments focusing on two main aspects:
\begin{itemize}[wide]
  \item (\ref{rq:correct}, \ref{rq:downstream}) The \textbf{usefulness} of \method{}.
    \ref{rq:correct} checks its correctness, and \ref{rq:downstream} validates its usefulness in downstream tasks by replicating results from previous work.
    Note that all grammar-constrained decoding methods in principle compute the \textbf{same} valid token masks, so they should all produce correct masks if implemented correctly, and perform equally well on downstream tasks.
  \item (\ref{rq:mask_overhead}, \ref{rq:throughput}) The \textbf{efficiency} of \method{}.
    \ref{rq:mask_overhead} focuses on the online overhead of \method{}, which is the main contribution of this paper, and \ref{rq:throughput} measures the end-to-end decoding throughput to show the overall impact of mask computation efficiency on decoding speed.
\end{itemize}

\input{chapters/experiments/setup}

\input{chapters/experiments/correctness}

\input{chapters/experiments/overhead}

\input{chapters/experiments/throughput}

\input{chapters/experiments/downstream}

%% file: chapters/experiments/setup.tex
\subsection{Experimental Setup}\label{sec:exp:setup}

In this section, we describe the experimental setup used to answer the research questions and evaluate \method{} against several state-of-the-art grammar-constrained decoding (GCD) methods.

\paragraph{Methodology}
In \ref{rq:correct}, \ref{rq:mask_overhead}, and \ref{rq:throughput}, we prepare the oracle sequences and use \textbf{teacher-forcing} during evaluation,
i.e., we always use the oracle next token at each decoding step,
\textbf{to ensure that all methods are evaluated under the same conditions.}
This is important because the context and the length of the generated sequences can significantly affect the mask computation process, making it hard to fairly compare different methods otherwise.

In \ref{rq:downstream}, we evaluate the downstream task performance of \method{} by letting the model generate tokens autoregressively without teacher-forcing.

\paragraph{Datasets}
There is no standard benchmark for evaluating grammar-constrained decoding methods.
We choose two representative tasks that require grammar-constrained decoding: code generation in Java, Go, and SQL, and JSON generation with specified JSON schemas.
For each task, we construct the evaluation dataset, with 1000 samples for each programming language and 1000 schemas for schema-conformant JSON generation.

For the Java, Go, and SQL datasets, we obtain their Lark grammars from the previous work Syncode~\citep{syncode}.
Because the grammar formats for XGrammar and Formatron are different from Lark, we manually convert the Lark grammars to respective formats for each baseline.
For each programming language, we take the first 1000 samples from the Stack dataset~\citep{thestack} that can be successfully parsed by the Lark parser to construct the evaluation dataset.

For the JSON Schemas, we use the benchmark dataset MaskBench~\citep{maskbench}, an extension of JSON Schema Bench~\citep{jsonschemabench} by adding schema conformant and non-conformant JSON instances to each schema.
We generate the Lark grammar from the JSON schemas using the script provided in MaskBench,
and only use the schemas in MaskBench where the Lark parser can successfully parse all the conformant JSON instances and reject all the non-conformant JSON instances.
We then randomly sample 1000 schemas for evaluation.
There are a total of 1337 positive samples and 2072 negative samples in the JSON dataset, and each schema has at least one positive sample and one negative sample.

\paragraph{Baselines}
We consider several recent state-of-the-art grammar-constrained decoding methods with open-source implementations as baselines, including XGrammar~\citep{xgrammar}, GreatGramma~\citep{greatgramma}, Formatron~\citep{formatron}, and LLGuidance~\citep{llguidance}.
\begin{itemize}[wide]
  \item XGrammar~\citep{xgrammar} uses a character-level non-deterministic PDA\footnote{In the latest implementation that we use in the experiments, this has been changed to an Earley~\citep{earley} parser.}. For each state, it
  precomputes the context-independent accepted and rejected tokens, and only calls the parser for the context-dependent tokens.
  It also includes many optimizations to speed up the parsing process, such as expanding rule contexts to further reduce context-dependent tokens, and persistent execution stacks to reduce memory consumption.
  \item GreatGramma~\citep{greatgramma} uses a lexer and a parser. It
  converts each token into all possible terminals sequences and reduces the number of parser calls by sharing the parser calls among tokens with the same terminal sequence.
  After computing the accepted terminal sequences, it maps them back to the original tokens.
  It also precomputes the context-dependent and context-independent terminal sequences for each parser state.
  \item Formatron~\citep{formatron} uses an Earley parser. It dynamically identifies and eliminates invalid or redundant parser states during parsing, and also uses context-independent prefiltering to reduce the number of tokens that need to be checked by the parser.
  \item LLGuidance~\citep{llguidance} uses a lexer and an Earley parser. It
  organizes the vocabulary into a trie, and skips the whole subtree if the prefix token is rejected.
  It also leverages the lexer on the vocabulary to pre-identify the terminal sequences.
\end{itemize}

There are other baselines not included in our experiments.
Domino~\citep{domino} and \citeauthor{koo2024automatabased}'s work~\citep{koo2024automatabased} are excluded due to the lack of an open-source implementation.
Pre$^3$~\citep{pre3} is excluded because it relies on unreleased code.
Outlines~\citep{outlines} and llama.cpp~\citep{llamacpp} are excluded because they are too slow to practically evaluate on such a large dataset, which is aligned with the findings in previous work~\citep{greatgramma,formatron}.
Syncode~\citep{syncode} is excluded because it rejects the oracle token in too many cases, probably because of its buggy implementation of unlexed characters handling.

\paragraph{Models}
We evaluate three recent open-source LLM series with different vocabulary sizes: Llama~3.2~\citep{llama3} (128k tokens), Qwen2.5~\citep{qwen,qwen2.5}~ (151k tokens), Gemma~3~\citep{gemma3} (262k tokens).
In RQ1 and RQ2, only the tokenizer is needed to calculate the mask given a prefix, so the model size does not affect the results.
In RQ3 and RQ4, we state the model used in each experiment, which is always the smallest model in each series to highlight the overhead of constraint decoding, and we also include the results on Qwen2.5~7B in RQ3 to see the effect of model size on throughput.

\paragraph{Implementation}
We implement \method{} in roughly 1100 lines of Python.
Similar to previous work~\citep{syncode,greatgramma}, we use the Lark parser\citep{lark} to construct the lexer and the LALR(1) parser from the grammar due to the one-to-one correspondence between the LR parsers and the DPDA~\citep{knuthTranslationLanguagesLeft1965}.
As described in Section~\ref{sec:lex_pre}, we reuse the lexer construction in GreatGramma~\citep{greatgramma}, which this is not our focus here.

\paragraph{Execution environment}
We conduct our experiments on a machine with 8 NVIDIA A100 GPUs (40 GB Memory), 2 Intel Xeon Gold 6348 CPUs (2.6GHz, 56 cores), and 512 GB RAM.
To ensure fairness, we run all the experiments with a single GPU and a single CPU thread.
For all the baselines, we use the latest versions of the respective libraries during the experiments.
The environment configuration file for the libraries used in the experiments is included in the replication package.

%% file: chapters/experiments/correctness.tex
\subsection{\ref{rq:correct}: Correctness of Mask Computation}\label{sec:exp:correctness}

\input{floats/passrate}

\paragraph{Metrics}
For each method, we measure the \textbf{sample pass rate}, i.e. the proportion of samples that are correctly processed by each method. Positive samples (including all programming language samples and some JSON samples) are counted as passed if they are not rejected, and negative samples (including some JSON samples) are counted as passed if they are rejected.

\paragraph{Results}
The sample pass rates are shown in Table~\ref{tab:passrate}.
\method{} achieves a high pass rate across all grammars and models, very close to 100\%.
We analyze the error cases and find that they are all directly rejected by the GreatGramma lexer we adopt, but \textbf{no error is caused by \method{} itself}.

There are several reasons for the lexer rejections, mainly falling into two categories:
\begin{enumerate*}
  \item The lexer assumes that each Unicode character is contained within a single token, but some tokenizers may split a Unicode character across multiple tokens, causing the lexer to misinterpret the input.
    This issue can be fixed by improving the lexer implementation to properly handle Unicode characters.
  \item The lexer is designed with a one-character lookahead, which may lead to misinterpretation of certain character sequences in some edge cases.
    This issue can be avoided by improving the grammar to better account for such sequences, but can only be fixed by fundamentally redesigning the lexer, which is out of the scope of this paper. But this kind of error is rare in practice.
\end{enumerate*}

Most of the baselines also achieve high pass rates (mostly above 99\%) on all the grammars and models, showing that existing GCD methods can generally produce correct masks.
One of the exceptions is Formatron on SQL, which has a significantly lower pass rate (around 67\%).
After further investigation, we find that Formatron refuses to accept the column alias in the query, resulting in frequent rejections.
The exact reason is unclear, but it is probably due to their buggy implementation of handling certain grammar constructs.

\begin{tcolorbox}[size=title]
  \textbf{Answer to \ref{rq:correct}}: \method{} produces correct masks for grammar-constrained decoding, achieving nearly 100\% sample pass rate across all grammars and models tested.
\end{tcolorbox}

%% file: floats/passrate.tex
\begin{table}[bt]
\centering
\caption{How many samples are correctly processed by each method. The symbol {\color{red} X} indicates the parser reports an error during mask calculation. Text in \textbf{bold} indicates the best performance, and text in \underline{underline} indicates the second best performance.}
\label{tab:passrate}
\begin{threeparttable}[b]
\newcommand{\oldtabcolsep}{\tabcolsep}
\setlength{\tabcolsep}{7pt}
\begin{tabular}{cl|rrrr}
\toprule
\multirow{2}{*}{Model} & \multirow{2}{*}{Method} & \multicolumn{4}{c}{Grammar} \\
& & Java & Go & SQL & JSON Schemas \\\midrule
\multirowcell{5}{Llama 3} & XGrammar & \underline{99.7\%} & \textbf{100.0\%} & \underline{99.7\%} & \textbf{100.0\%} \\
& Formatron & 99.4\% & {\color{red} X}\tnote{a} & 67.0\% & {\color{red} X}\tnote{a} \\
& GreatGramma & \textbf{100.0\%} & \underline{99.9\%} & 97.1\% & 99.6\% \\
& LLGuidance & \textbf{100.0\%} & {\color{red} X}\tnote{b} & \textbf{99.9\%} & \underline{99.9\%} \\
& \chl \method{} (Ours) & \chl \textbf{100.0\%} & \chl \underline{99.9\%} & \chl \textbf{99.9\%} & \chl 99.6\% \\\midrule
\multirowcell{5}{Qwen2.5} & XGrammar & \underline{99.7\%} & \textbf{100.0\%} & \underline{99.7\%} & \textbf{100.0\%} \\
& Formatron & 99.4\% & {\color{red} X}\tnote{a} & 67.0\% & {\color{red} X}\tnote{a} \\
& GreatGramma & \textbf{100.0\%} & \underline{99.9\%} & 97.2\% & 99.6\% \\
& LLGuidance & \textbf{100.0\%} & {\color{red} X}\tnote{b} & \textbf{99.9\%} & \underline{99.9\%} \\
& \chl \method{} (Ours) & \chl \textbf{100.0\%} & \chl \underline{99.9\%} & \chl \textbf{99.9\%} & \chl 99.6\% \\\midrule
\multirowcell{5}{Gemma 3} & XGrammar & \underline{99.7\%} & \textbf{100.0\%} & \underline{99.7\%} & \textbf{100.0\%} \\
& Formatron & \textbf{100.0\%} & {\color{red} X}\tnote{a} & 67.3\% & {\color{red} X}\tnote{a} \\
& GreatGramma & \textbf{100.0\%} & \underline{99.9\%} & 97.2\% & 99.6\% \\
& LLGuidance & 94.0\% & {\color{red} X}\tnote{b} & \textbf{99.9\%} & \underline{99.9\%} \\
& \chl \method{} (Ours) & \chl \textbf{100.0\%} & \chl \underline{99.9\%} & \chl \textbf{99.9\%} & \chl 99.6\% \\\bottomrule
\end{tabular}
\setlength{\tabcolsep}{\oldtabcolsep}
\begin{tablenotes}
\item[a] Formatron stops responding on multiple samples, so we terminate the process. 
\item[b] LLGuidance reports \texttt{ParserTooComplex} error.
\end{tablenotes}
\end{threeparttable}
\end{table}

%% file: chapters/experiments/overhead.tex
\subsection{\ref{rq:mask_overhead}: Overhead of Mask Computation}\label{sec:exp:overhead}

\paragraph{Metrics}
For each method, we measure the \textbf{average overhead}, i.e. the time of computing the CPU mask tensor at each decoding step.
We ignore the time taken to transfer the mask to the GPU and apply it to the logits, because this is the same for all methods\footnote{In \method{}, one can preload all the mask tensors on the GPU memory before decoding begins, eliminating the transfer overhead.
However, the transfer overhead is usually too small to justify the extra GPU memory usage.}.

\input{floats/overhead}

\paragraph{Results}
The average overhead is shown in Table~\ref{tab:overhead}.
\method{} significantly outperforms all the baselines across all grammars and models,
being 300 to 700 times faster on complex programming language grammars compared to the best baseline, LLGuidance,
and generally 30 times faster on the relatively simple JSON schema grammar.

The overhead of baselines is significantly larger on programming languages than that on JSON schemas.
This is because they need to repeat the parsing process for many times per decoding step to compute the valid token masks, so their overhead increases when the parsing process becomes more expensive due to increased grammar complexity.
In contrast, the overhead of \method{} remains stable across different grammars, because it only needs to parse the current token and traverse the parser stack once per decoding step, making this overhead mostly negligible.

The overhead of baselines is significantly larger on Gemma 3 than that on the other models\footnote{The performance of LLGuidance on JSON Schemas is roughly the same across different models, probably because the grammar is simple enough that the vocabulary size does not significantly affect the performance.}, because Gemma 3 has a much larger vocabulary than the other two models, and the time complexity of the baselines is roughly linear to the vocabulary size.
In contrast, the performance of \method{} is stable across different models, because its time complexity is independent of the vocabulary size.

\begin{tcolorbox}[size=title]
  \textbf{Answer to \ref{rq:mask_overhead}}: \method{} significantly reduces the overhead of mask computation in GCD, achieving a speedup of up to 700x on complex programming language grammars and 30x on simpler JSON schemas compared to the best baseline.
\end{tcolorbox}

%% file: floats/overhead.tex
\begin{table}[th]
\centering
\caption{Average overhead (microseconds) of computing the mask per token in GCD tasks using different methods. The symbol {\color{red} X} indicates the parser reports an error during mask calculation. Text in \textbf{bold} indicates the best performance, and text in \underline{underline} indicates the second best performance.}
\label{tab:overhead}
\begin{tabular}{cl|rrrr}
\toprule
\multirow{2}{*}{Model} & \multirow{2}{*}{Method} & \multicolumn{4}{c}{Grammar} \\
& & Java & Go & SQL & JSON Schemas \\\midrule
\multirowcell{5}{Llama 3\\$|\mathcal V| = 128256$} & XGrammar & 309514.2 & 281500.9 & 324663.6 & 26257.6 \\
& Formatron & 393540.4 & {\color{red} X} & 303974.1 & {\color{red} X} \\
& GreatGramma & 21402.8 & \underline{27220.9} & 20954.5 & 8556.2 \\
& LLGuidance & \underline{1352.4} & {\color{red} X} & \underline{826.1} & \underline{72.7} \\
& \chl \method{} (Ours) & \chl \textbf{2.4} & \chl \textbf{2.5} & \chl \textbf{2.6} & \chl \textbf{2.3} \\\midrule
\multirowcell{5}{Qwen2.5\\$|\mathcal V| = 151665$} & XGrammar & 302421.9 & 278333.4 & 299968.6 & 29470.0 \\
& Formatron & 378921.0 & {\color{red} X} & 253311.9 & {\color{red} X} \\
& GreatGramma & 24570.9 & \underline{27649.8} & 24053.9 & 11038.4 \\
& LLGuidance & \underline{1408.2} & {\color{red} X} & \underline{865.2} & \underline{72.1} \\
& \chl \method{} (Ours) & \chl \textbf{2.4} & \chl \textbf{2.5} & \chl \textbf{2.5} & \chl \textbf{2.2}\\\midrule
\multirowcell{5}{Gemma 3\\$|\mathcal V| = 262145$} & XGrammar & 649321.1 & 458952.0 & 416625.0 & 54164.4 \\
& Formatron & 696144.6 & {\color{red} X} & 444810.0 & {\color{red} X} \\
& GreatGramma & 43218.6 & \underline{48354.0} & 40026.9 & 25717.3 \\
& LLGuidance & \underline{1802.6} & {\color{red} X} & \underline{1180.8} & \underline{72.1} \\
& \chl \method{} (Ours) & \chl \textbf{2.3} & \chl \textbf{2.5} & \chl \textbf{2.4} & \chl \textbf{2.3} \\\bottomrule
\end{tabular}
\end{table}

%% file: chapters/experiments/throughput.tex
\input{floats/throughput}

\subsection{\ref{rq:throughput}: End-to-End Throughput}\label{sec:exp:throughput}

\paragraph{Settings}
In this experiment, we run the actual model inference using the vLLM library~\citep{vllm}, and measure the throughput, i.e., the number of tokens processed per second, of the entire decoding process, considering both the time taken by mask computation and model inference.
This metric reflects the overall efficiency of each method in practical usage.
We only consider the accepted samples of each method.
We compare \method{} with the fastest baseline LLGuidance in the previous section, and also include the throughput when not using any constraint decoding method as a reference, under various batch sizes of 1, 2, 4, ..., 256.

We use the smallest models in the three model series to highlight the overhead of constraint decoding.
On these models, the model inference time is relatively small, making the overhead of constraint decoding more pronounced.
Specifically, we use Llama~3~1B, Qwen~2.5~0.5B, and Gemma~3~270M.
We also include Qwen~2.5~7B to see the effect of model size on throughput.

\paragraph{Results}
The end-to-end throughput results on all datasets are present in Figures~\ref{fig:throughput}.

On all datasets, \method{} consistently outperforms LLGuidance across all models and batch sizes, and is very close to the performance of unconstrained decoding.
The difference is more pronounced on the programming languages, where more complex grammar leads to higher overhead of LLGuidance.

As the model size increases, the difference in throughput becomes smaller, because the model inference time becomes more dominant.
However, since smaller models are less capable, grammar-constrained decoding is probably more useful for smaller models to ensure the syntactic correctness.

As the batch size increases, the difference in throughput becomes larger, because the average model inference time per token decreases with larger batch sizes,
indicating that the overhead introduced by constraint decoding becomes more pronounced at larger batch sizes.

\begin{tcolorbox}[size=title]
  \textbf{Answer to \ref{rq:throughput}}: \method{} can deliver significantly higher end-to-end throughput than existing GCD methods across different models and batch sizes, and its throughput approaches that of unconstrained decoding, especially on smaller models and larger batch sizes.
\end{tcolorbox}

%% file: floats/throughput.tex
\newcommand{\oldtabcolsep}{\tabcolsep}
\setlength{\tabcolsep}{.5\tabcolsep}

\begin{figure}[bh]
\centering

\begin{tabular}{ccccc}
& Gemma 3 270M & Qwen2.5 0.5B & Llama 3.2 1B & Qwen2.5 7B \\

Java &
\raisebox{-.5\totalheight}{\includegraphics[width=.21\textwidth]{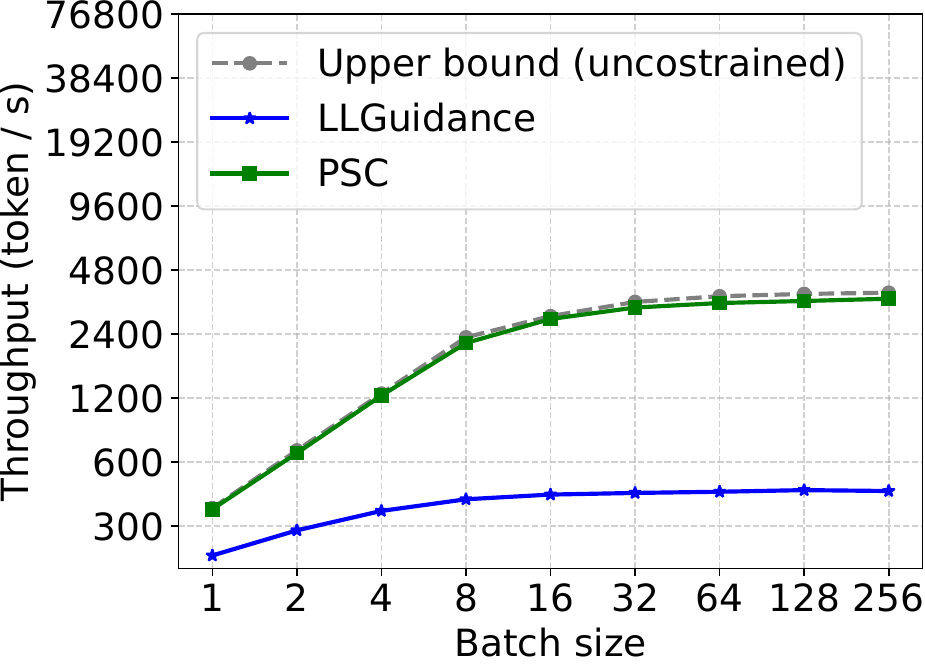}}
&
\raisebox{-.5\totalheight}{\includegraphics[width=.21\textwidth]{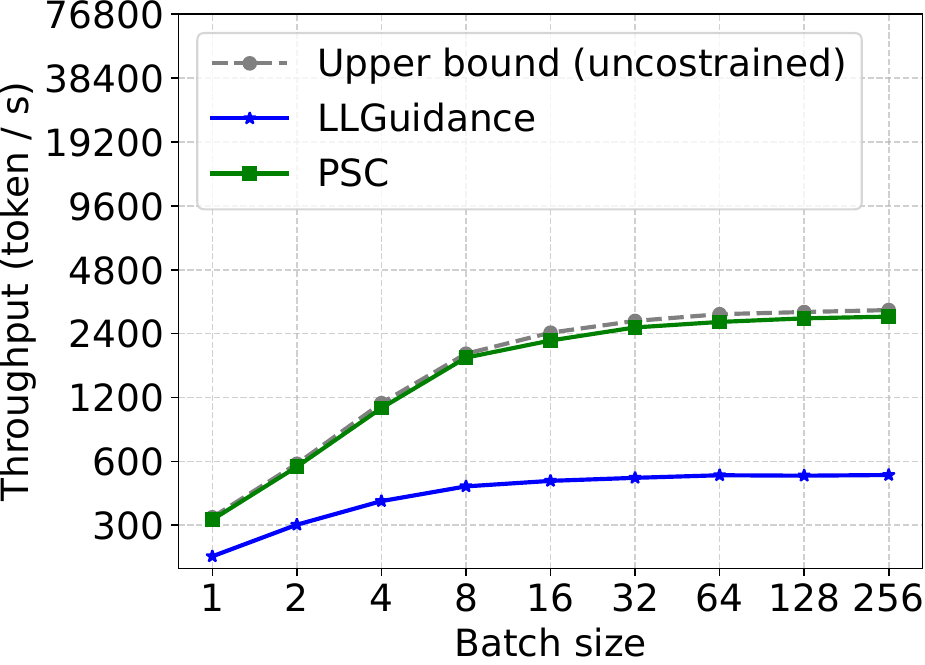}}
&
\raisebox{-.5\totalheight}{\includegraphics[width=.21\textwidth]{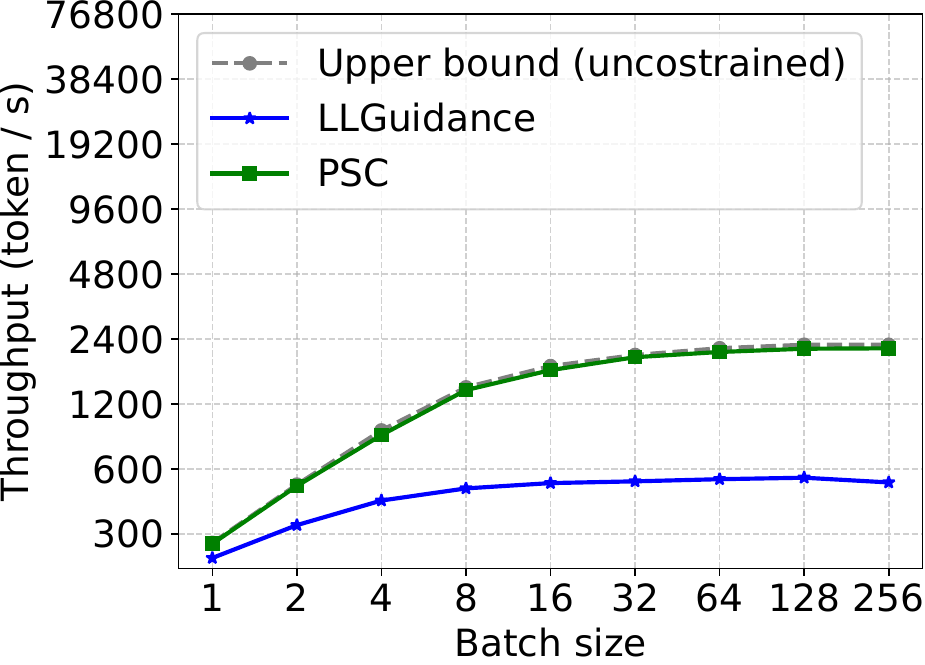}}
&
\raisebox{-.5\totalheight}{\includegraphics[width=.21\textwidth]{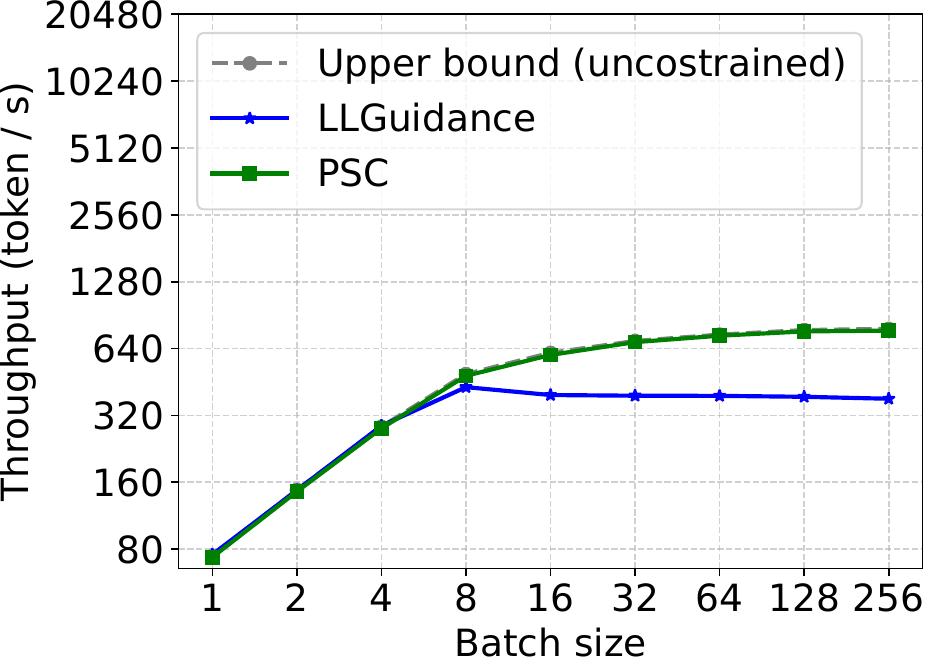}}\\

Go &
\raisebox{-.5\totalheight}{\includegraphics[width=.21\textwidth]{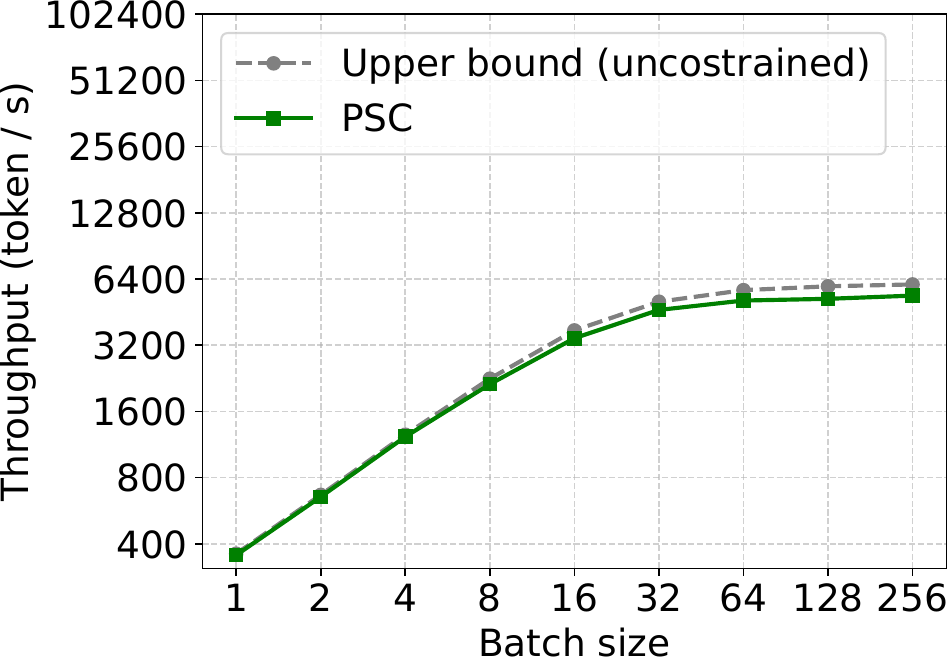}}
&
\raisebox{-.5\totalheight}{\includegraphics[width=.21\textwidth]{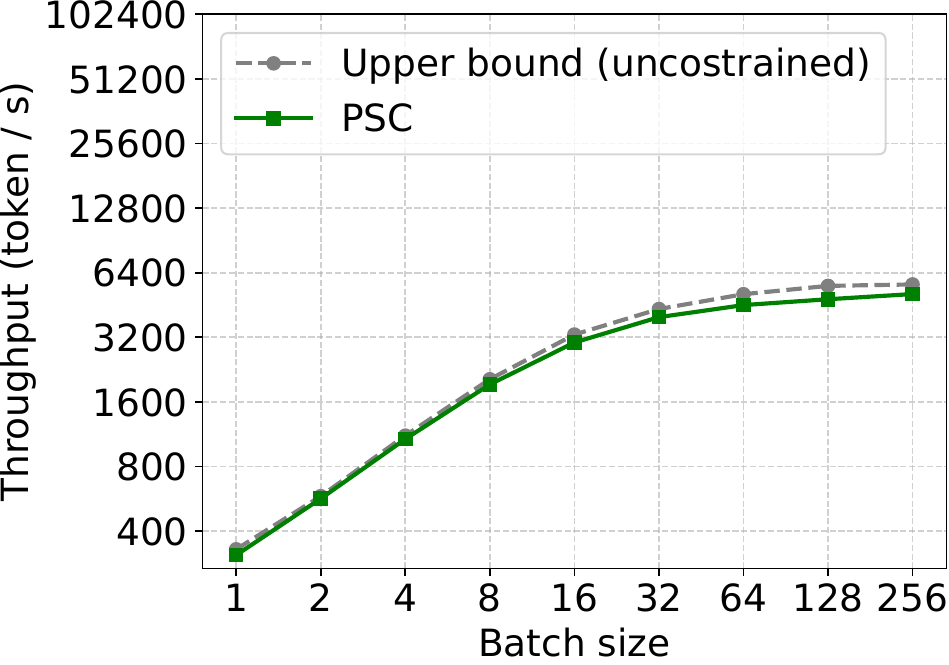}}
&
\raisebox{-.5\totalheight}{\includegraphics[width=.21\textwidth]{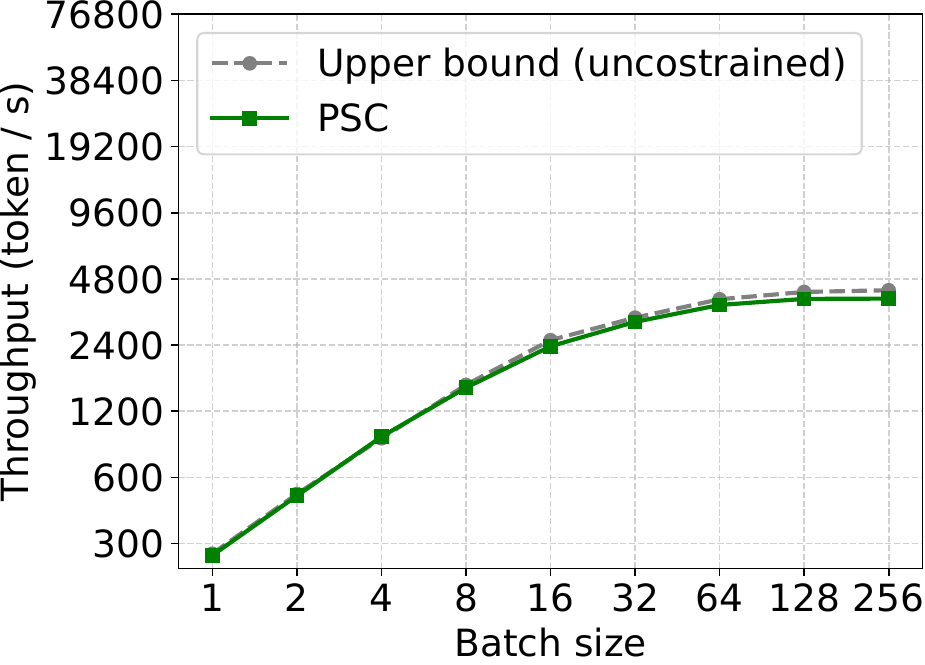}}
&
\raisebox{-.5\totalheight}{\includegraphics[width=.21\textwidth]{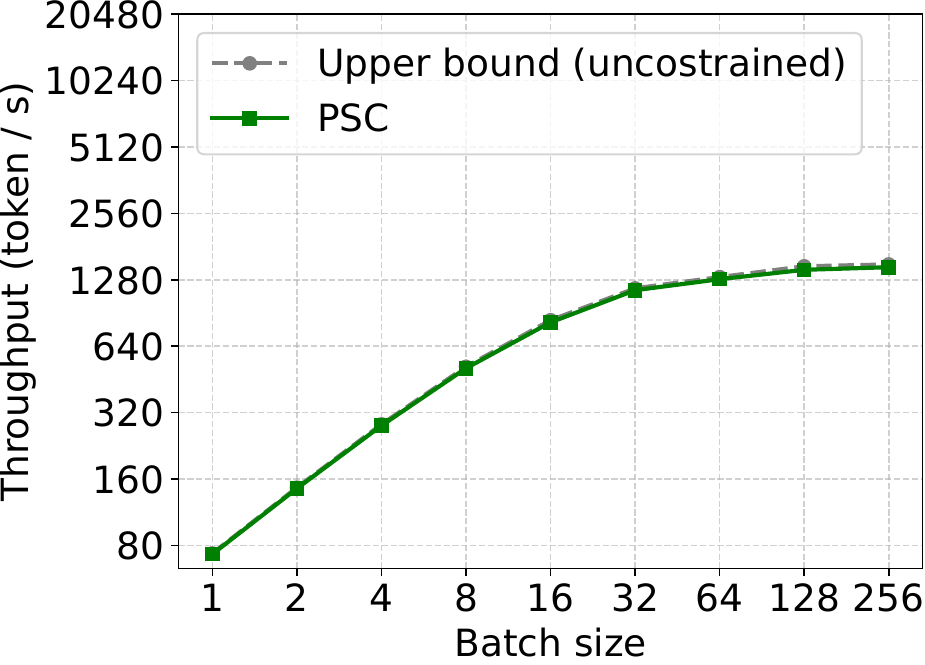}}\\

SQL &
\raisebox{-.5\totalheight}{\includegraphics[width=.21\textwidth]{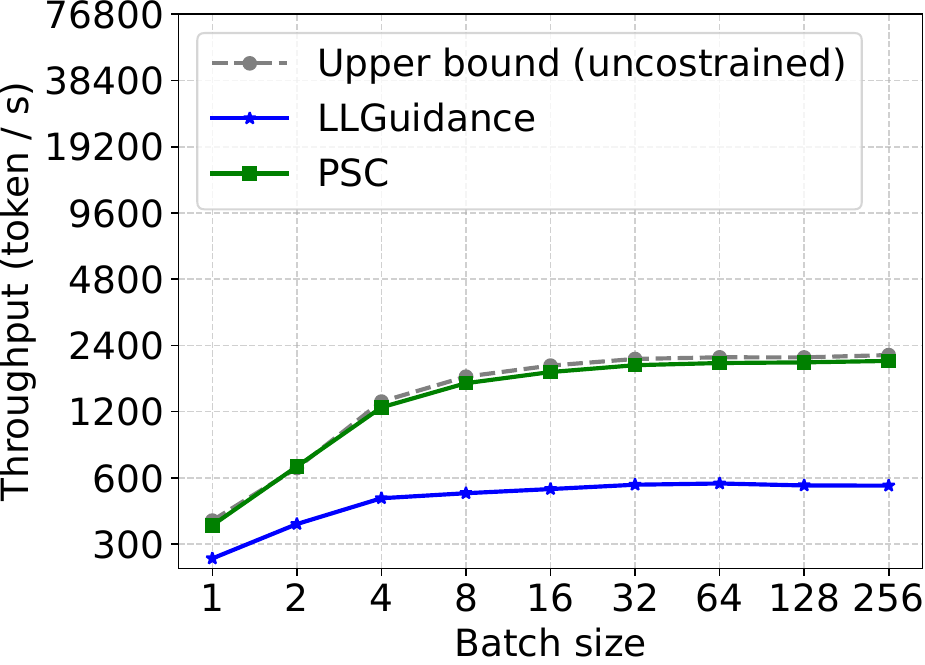}}
&
\raisebox{-.5\totalheight}{\includegraphics[width=.21\textwidth]{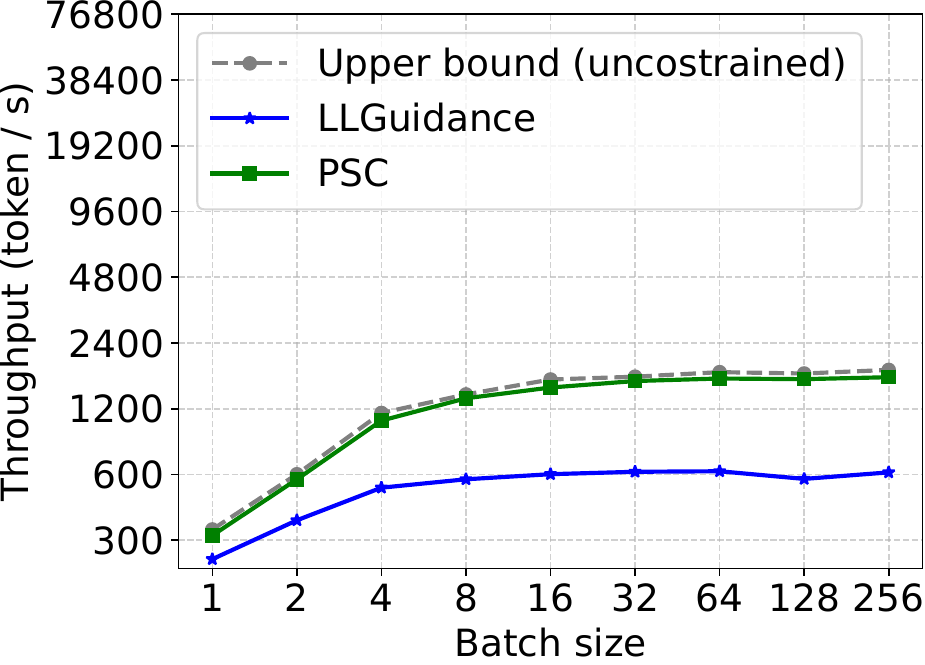}}
&
\raisebox{-.5\totalheight}{\includegraphics[width=.21\textwidth]{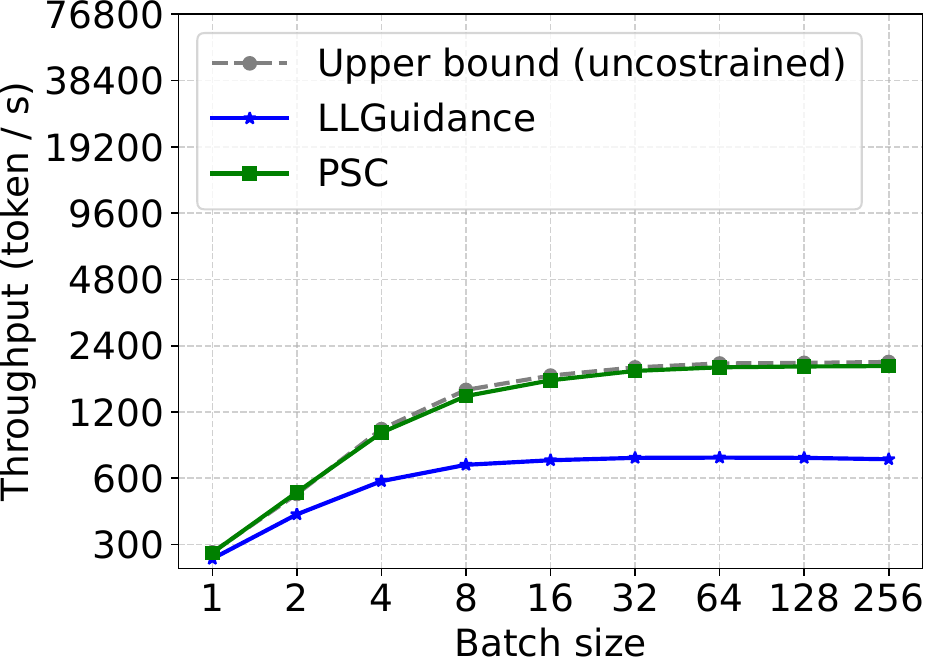}}
&
\raisebox{-.5\totalheight}{\includegraphics[width=.21\textwidth]{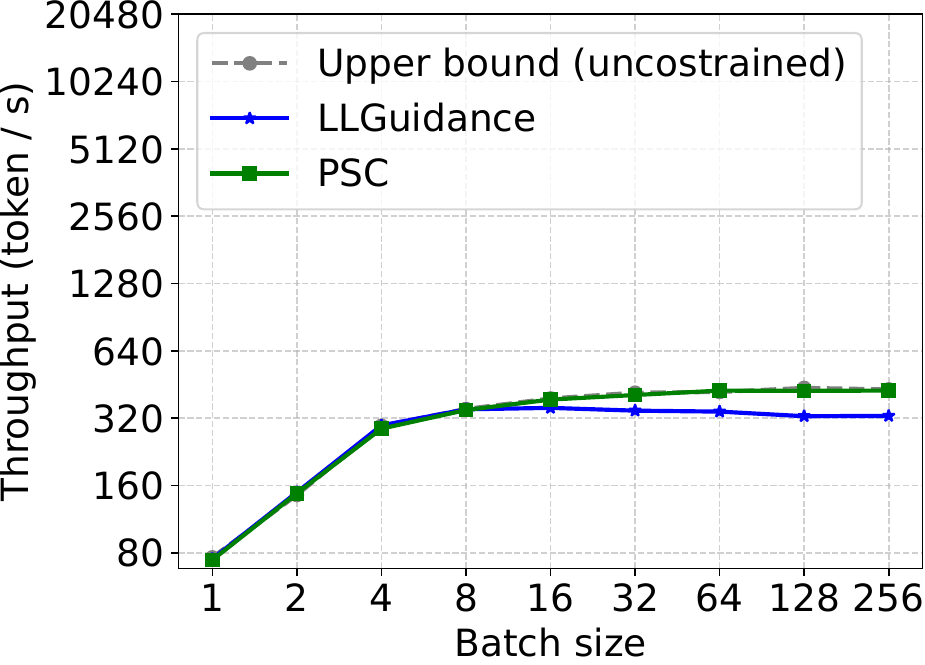}}\\

\multirowcell{1}{JSON\\Schemas} &
\raisebox{-.5\totalheight}{\includegraphics[width=.21\textwidth]{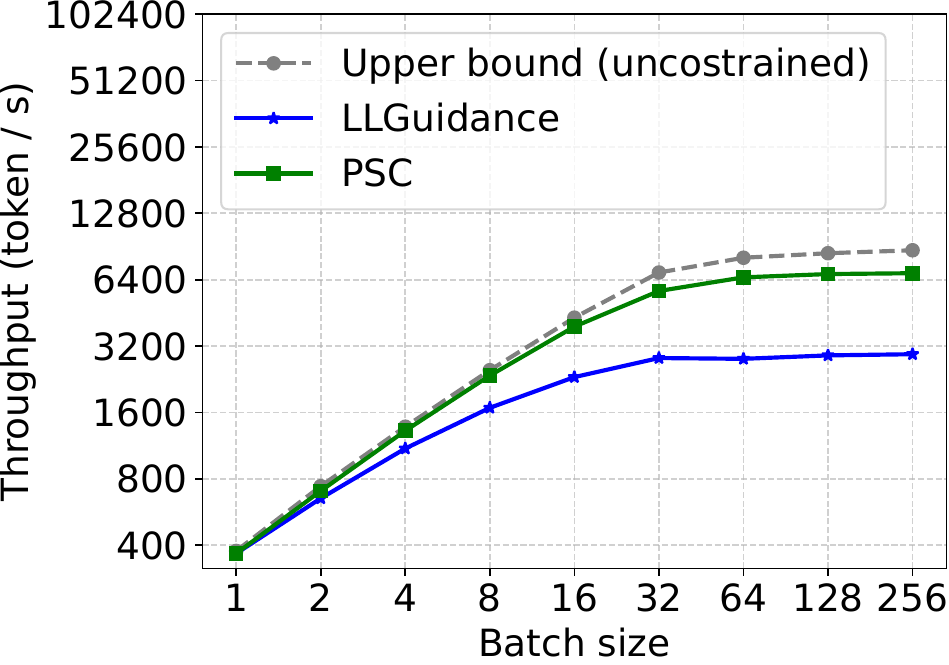}}
&
\raisebox{-.5\totalheight}{\includegraphics[width=.21\textwidth]{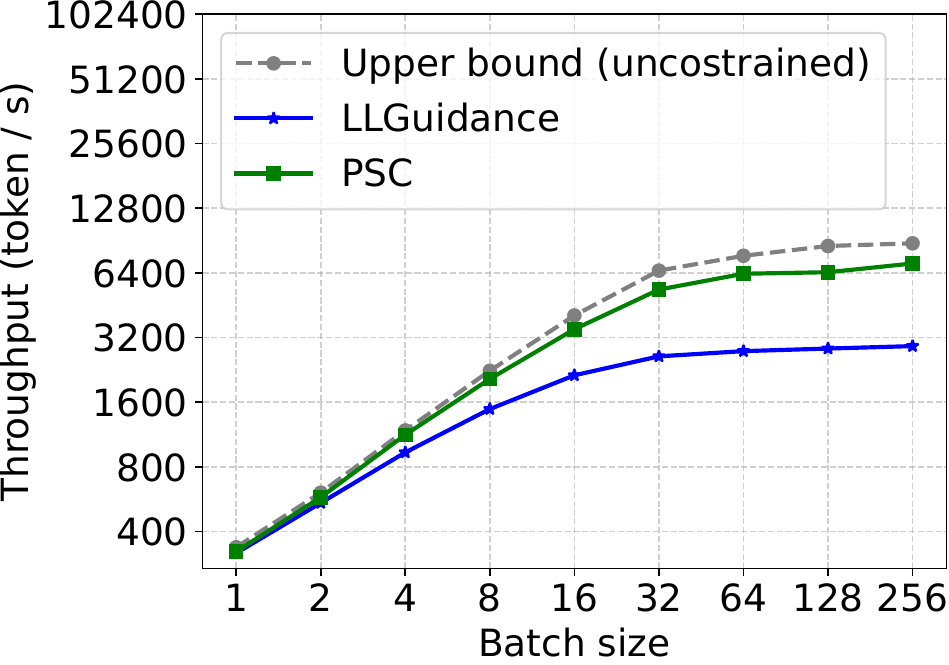}}
&
\raisebox{-.5\totalheight}{\includegraphics[width=.21\textwidth]{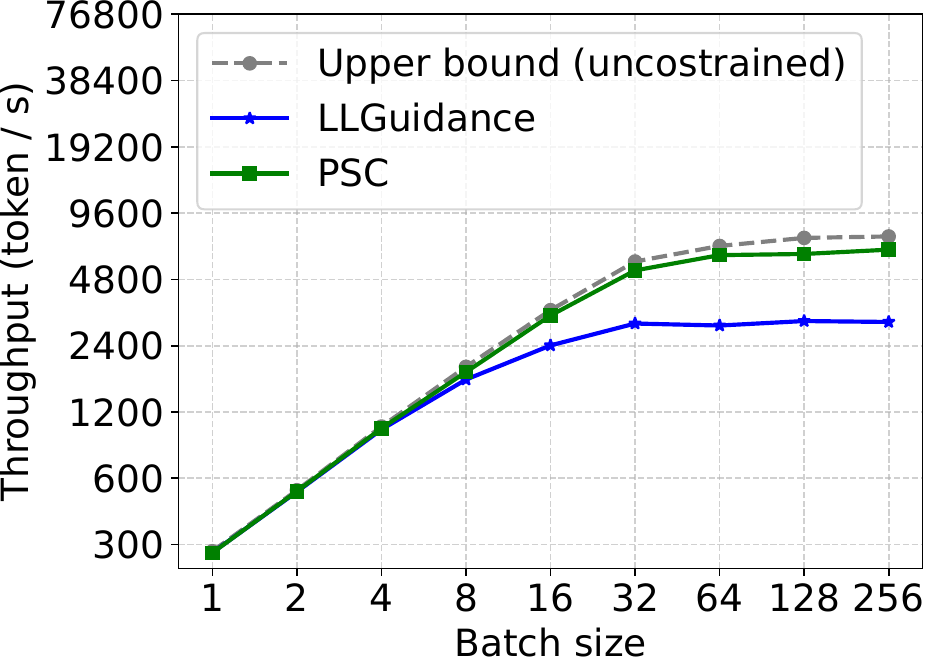}}
&
\raisebox{-.5\totalheight}{\includegraphics[width=.21\textwidth]{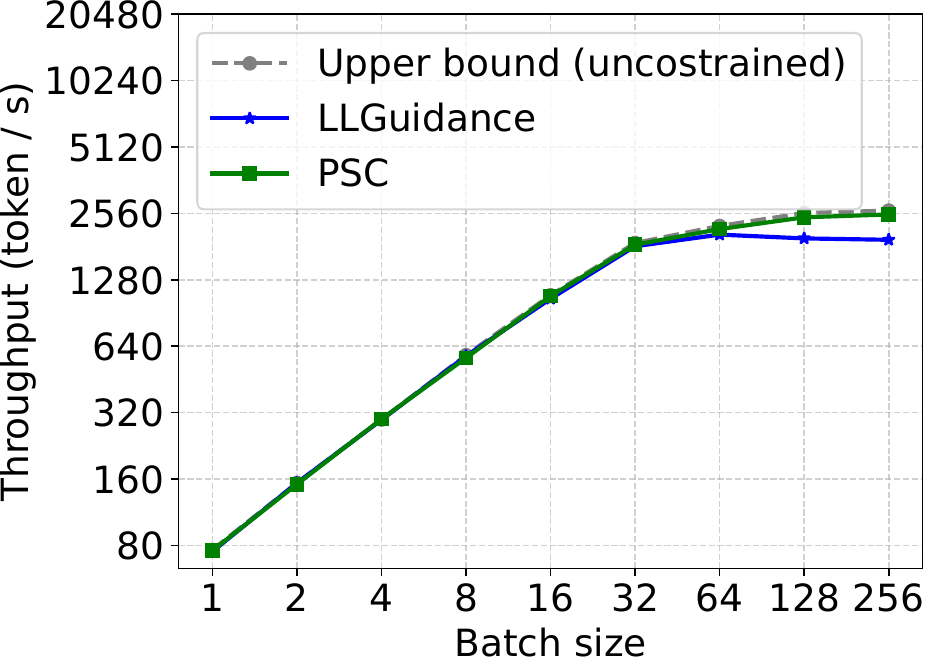}}
\end{tabular}
\caption{End-to-end throughput (tokens per second) on the different dataset using different methods on different models with various batch sizes.}
\label{fig:throughput}
\end{figure}

\setlength{\tabcolsep}{\oldtabcolsep}

%% file: chapters/experiments/downstream.tex
\subsection{\ref{rq:downstream}: Downstream Performance of Grammar-Constrained Decoding}\label{sec:downstream}

In this section, we evaluate the performance of \method{} on downstream tasks that benefit from grammar-constrained decoding.
Note that, in principle, all GCD methods calculate the same valid token masks, so their performance on downstream tasks should be similar, and we should observe similar downstream performance improvements over unconstrained decoding.
To verify this, we mainly replicate the downstream experiments from Syncode~\citep{syncode}, comparing \method{} with unconstrained decoding.

\paragraph{Settings}
We evaluate three downstream tasks: Go generation, schema-conformant JSON generation, and text-to-SQL generation.
For each task, we compare grammar-constrained decoding using \method{} (GCD) with unconstrained decoding (Standard).
We use the same models as in previous experiments: Llama~3.2~1B, Qwen2.5~0.5B and Gemma~3~270M, and use instruct-tuned versions of the models for tasks with chat-style instructions.
For most tasks, we use the pass@$k$ metric~\citep{codex} for evaluation.
When generating outputs, we use greedy decoding for pass@1, and sampling (temperature 1.0, no top-$p$ or top-$k$ filtering) for pass@$k$ where $k>1$.

\subsubsection{Go Generation}

\paragraph{Settings}
We use three datasets to evaluate the Go generation performance using GCD with \method{}:
\begin{enumerate*}
\item the Go subset of Multilingual HumanEval~\citep{mxeval,codex} dataset (160 samples);
\item the MBGP~\cite{mxeval} dataset (939 samples from the MBPP~\cite{mbpp} dataset ported from Python to Go);
\item and the Go subset of McEval~\cite{mceval} dataset (50 questions in Go).
\end{enumerate*}

\paragraph{Results}
The results are presented in Table~\ref{tab:go}.
For all the models tested, grammar-constrained decoding with \method{} using the Go grammar outperforms unconstrained decoding (Standard) on all three datasets.
This confirms that grammar-constrained decoding can effectively improve the performance of LLMs on Go generation tasks.

\input{floats/downstream/go}

\subsubsection{JSON Generation}

\paragraph{Settings}
We use the JSON-Mode-Eval~\citep{jsonmodeeval} dataset, which contains 100 samples of natural language instructions, each paired with a JSON schema and the corresponding correct JSON output.
We slightly modify the dataset for our experiments\footnote{The oracle answer of sample 39 simply copies the schema (which is valid JSON but clearly not the intended output), so we exclude this sample from evaluation.
Our script (obtained from MaskBench~\citep{maskbench}, as described in Section~\ref{sec:exp:setup}) used to generate grammars from JSON schemas fails to generate a valid grammar for certain schemas.
To address this, we replace the schemas of sample 19, 24, 27, 33, 45 and 72 with the equivalent schemas supported by our grammar generation script; the schemas of sample 1, 15, 22, 90 and 97 cannot be converted to equivalent schemas supported by our grammar generation script, so we only enforce the JSON grammar on these samples.}.
The exact schemas used in our experiments are provided in the open-sourced code.
We compare one additional method: grammar-constrained decoding using \method{} with grammars generated from the JSON schemas (GCD + Schema).
The generated JSON is considered correct if and only if it can be converted to a JSON object that exactly matches the oracle answer.
For the unconstrained decoding, we strip the code block markers, e.g. \texttt{\textasciigrave\textasciigrave\textasciigrave json} ... \texttt{\textasciigrave\textasciigrave\textasciigrave}, before checking the validity of the generated JSON.

\paragraph{Results}
The results are presented in Table~\ref{tab:json}.
For all the models tested, grammar-constrained decoding using \method{} with the grammar of the JSON schema (GCD + Schema) significantly outperforms grammar-constrained decoding with only the JSON grammar (GCD), which outperforms unconstrained decoding (Standard).
The improvement is especially significant on Llama~3.2~1B and Gemma~3~270M, with up to 10 points and 27 points absolute improvement in pass@1, respectively.
This confirms that grammar-constrained decoding can effectively improve the performance of LLMs on schema-conformant JSON generation tasks.

\input{floats/downstream/json}

\subsubsection{Text-to-SQL Generation}

\paragraph{Settings}
We use the development set of Spider~\citep{spider} dataset, containing 1,034 text-to-SQL samples, and construct the same prompts as in Syncode~\citep{syncode}.
During the experiments, we fix\footnote{We found that the SQL grammar used in Syncode~\citep{syncode} cannot parse the \texttt{NOT} operator in the boolean expressions, making some valid SQL queries unparsable.
We fixed this issue by adding the \texttt{NOT} operator to the grammar.} the SQL grammar used in Syncode~\citep{syncode}.
The fixed grammar is provided in the open-sourced code.
We use the standard execution accuracy (Exec Acc)~\citep{sql-eval} metric in this experiment, which uses greedy decoding to generate 1 SQL query for each sample.

\paragraph{Results}
The results are presented in Table~\ref{tab:json}.
For all the models tested, grammar-constrained decoding with \method{} using the SQL grammar outperforms unconstrained decoding (Standard).
This confirms that grammar-constrained decoding can effectively improve the performance of LLMs on text-to-SQL generation tasks.

\begin{tcolorbox}[size=title]
  \textbf{Answer to \ref{rq:downstream}}: Grammar-constrained decoding using \method{} effectively improves the downstream task performance of LLMs on Go generation, schema-conformant JSON generation, and text-to-SQL generation tasks.
\end{tcolorbox}

%% file: floats/downstream/go.tex
\begin{table}[t]
\centering
\caption{The pass@$k$ scores (\%) of different methods on Multilingual HumanEval Go dataset, MBGP dataset and McEval Go dataset using grammar-constrained decoding and unconstrained decoding (Standard).
For grammar-constrained decoding, we use \method{} with the Go grammar to compute the valid token masks.}
\label{tab:go}
\begin{adjustbox}{max width=\textwidth}
\begin{tabular}{c|rrrrrr|rrrrrr|rrrrrr}
\toprule
\multirow{3}{*}{Method} & \multicolumn{6}{c|}{Multilingual HumanEval Go} & \multicolumn{6}{c|}{MBGP} & \multicolumn{6}{c}{McEval Go} \\
& \multicolumn{6}{c|}{pass@$k$} & \multicolumn{6}{c|}{pass@$k$} & \multicolumn{6}{c}{pass@$k$} \\
& 1 & 3 & 5 & 10 & 20 & 50 & 1 & 3 & 5 & 10 & 20 & 50 & 1 & 3 & 5 & 10 & 20 & 50 \\\midrule
\multicolumn{13}{l}{\textcolor{gray}{$\triangleright$\textit{Llama 3.2 1B}}} \\
Standard & \textbf{5.6} & 4.0 & 5.4 & 7.4 & 9.3 & 11.9 & 13.1 & 7.3 & 10.2 & 14.9 & 20.2 & 27.8 & \textbf{6.0} & 7.8 & 11.0 & 15.4 & 18.6 & 22.0 \\
\rowcolor{\mycolor} GCD & \textbf{5.6} & \textbf{4.7} & \textbf{6.3} & \textbf{8.6} & \textbf{10.6} & \textbf{13.1} & \textbf{13.2} & \textbf{7.5} & \textbf{10.5} & \textbf{15.5} & \textbf{21.3} & \textbf{29.9} & \textbf{6.0} & \textbf{8.0} & \textbf{11.3} & \textbf{16.3} & \textbf{20.9} & \textbf{26.0} \\\midrule
\multicolumn{13}{l}{\textcolor{gray}{$\triangleright$\textit{Qwen2.5 0.5B}}} \\
Standard & \textbf{8.8} & 5.4 & 7.2 & 9.9 & 12.4 & 15.6 & 13.7 & \textbf{9.3} & 12.6 & 18.0 & 24.1 & 32.5 & \textbf{12.0} & 11.2 & 14.7 & 19.2 & 24.2 & \textbf{34.0} \\
\rowcolor{\mycolor} GCD & \textbf{8.8} & \textbf{6.0} & \textbf{7.9} & \textbf{10.5} & \textbf{13.2} & \textbf{16.3} & \textbf{14.0} & \textbf{9.3} & \textbf{12.8} & \textbf{18.3} & \textbf{24.5} & \textbf{32.6} & \textbf{12.0} & \textbf{11.6} & \textbf{15.5} & \textbf{20.6} & \textbf{25.9} & \textbf{34.0} \\\midrule
\multicolumn{13}{l}{\textcolor{gray}{$\triangleright$\textit{Gemma 3 270M}}} \\
Standard & \textbf{0.6} & 0.8 & 1.0 & 1.3 & 1.8 & 2.5 & 1.9 & \textbf{0.8} & \textbf{1.3} & 2.2 & 3.6 & 6.1 & \textbf{4.0} & 0.6 & 0.9 & 1.6 & 2.6 & 4.0 \\
\rowcolor{\mycolor} GCD & \textbf{0.6} & \textbf{1.0} & \textbf{1.3} & \textbf{1.9} & \textbf{2.9} & \textbf{5.0} & \textbf{2.0} & \textbf{0.8} & \textbf{1.3} & \textbf{2.3} & \textbf{3.7} & \textbf{6.2} & \textbf{4.0} & \textbf{1.2} & \textbf{1.7} & \textbf{2.6} & \textbf{3.6} & \textbf{6.0} \\\bottomrule
\end{tabular}
\end{adjustbox}
\end{table}

%% file: floats/downstream/json.tex
\begin{table}[t]
\centering
\caption{The results of different methods using grammar-constrained decoding and unconstrained decoding (Standard) on schema-conformant JSON generation of the JSON-Mode-Eval dataset and text-to-SQL generation of the Spider dataset.
For grammar-constrained decoding, we use \method{} to compute the valid token masks.
The GCD method uses only the JSON grammar (not specific to the schema) or the SQL grammar, while the GCD + Schema method uses the grammar generated from the JSON schema.}
\label{tab:json}
\begin{adjustbox}{max width=\textwidth}
\begin{tabular}{cc|rrrrrr|ccccc}
\toprule
\multirow{3}{*}{Model} & \multirow{3}{*}{Method} & \multicolumn{6}{c|}{JSON-Mode-Eval} & \multicolumn{5}{c}{Spider} \\
& & \multicolumn{6}{c|}{pass@$k$(\%)} & \multicolumn{5}{c}{Execution accuracy(\%)} \\
& & 1 & 3 & 5 & 10 & 20 & 50 & Easy & Medium & Hard & Extra Hard & Overall \\\midrule
\multirowcell{3}{Llama 3.2 1B} & Standard & 55.6 & 53.9 & 60.0 & 67.1 & 73.5 & 80.8 & 35.9 & 25.3 & \textbf{14.9} & \textbf{6.6} & 23.1 \\
& \chl GCD & \chl 56.6 & \chl 55.9 & \chl 61.9 & \chl 68.1 & \chl 73.7 & \chl 80.8 & \chl \textbf{38.7} & \chl \textbf{27.8} & \chl \textbf{14.9} & \chl \textbf{6.6} & \chl \textbf{24.9} \\
& \chl GCD {\scriptsize + Schema} & \chl \textbf{68.7} & \chl \textbf{70.9} & \chl \textbf{74.2} & \chl \textbf{78.0} & \chl \textbf{81.4} & \chl \textbf{84.8} & - & - & - & - & - \\\midrule
\multirowcell{3}{Qwen2.5 0.5B} & Standard & 64.7 & 66.3 & 69.7 & 73.1 & 75.9 & 80.8 & \textbf{31.0} & 24.0 & \textbf{13.2} & \textbf{6.6} & 21.1 \\
& \chl GCD & \chl 64.7 & \chl 67.0 & \chl 70.8 & \chl 74.8 & \chl 77.8 & \chl 81.8 & \chl \textbf{31.0} & \chl \textbf{24.2} & \chl \textbf{13.2} & \chl \textbf{6.6} & \chl \textbf{21.2} \\
& \chl GCD {\scriptsize + Schema} & \chl \textbf{67.7} & \chl \textbf{69.8} & \chl \textbf{73.0} & \chl \textbf{77.2} & \chl \textbf{81.2} & \chl \textbf{85.9} & - & - & - & - & - \\\midrule
\multirowcell{3}{Gemma 3 270M} & Standard & 28.3 & 31.3 & 33.5 & 36.1 & 38.9 & 44.4 & 0.4 & 0.4 & 0.0 & 0.6 & 0.4 \\
& \chl GCD & \chl 29.3 & \chl 33.2 & \chl 35.6 & \chl 38.6 & \chl 41.4 & \chl 45.5 & \chl \textbf{1.6} & \chl \textbf{1.3} & \chl \textbf{0.6} & \chl \textbf{1.8} & \chl \textbf{1.4} \\
& \chl GCD {\scriptsize + Schema} & \chl \textbf{56.6} & \chl \textbf{62.1} & \chl \textbf{64.2} & \chl \textbf{66.2} & \chl \textbf{67.6} & \chl \textbf{68.7} & - & - & - & - & - \\\bottomrule
\end{tabular}
\end{adjustbox}
\end{table}

%% file: chapters/discuss.tex
\section{Discussion}\label{sec:discuss}

In this section, we mainly discuss the preprocessing overhead of \method{}, and analyze the applicability of \method{} in practical scenarios.
We split the users of \method{} into two types: \textbf{preprocessing providers} who perform the preprocessing, and \textbf{users} who use the preprocessing results to perform grammar-constrained decoding.
The preprocessing providers can be the same as or different from the users. For example, the preprocessing providers can be the model developers who preprocess for common grammars and share the results with users, or they can be the users themselves who preprocess for their specific grammars.

\input{floats/break-even-demo}
\subsection{Overhead for Preprocessing Providers}\label{sec:preprocess}

The preprocessing overhead of \method{} includes the time and memory footprint of preprocessing.
The preprocessing here includes all the steps described in Algorithm~\ref{alg:offline}, as well as the precomputation of all the possible token masks for the DFA states.
The raw results of preprocessing overhead are presented in Table~\ref{tab:preprocess}.

\input{floats/preprocess}

\subsubsection{Raw Results of Preprocessing Overhead}

\paragraph{Results on JSON schemas} The average preprocessing time is around half to one minute per schema, and the memory footprint is around 3 GiB for Llama~3 and Qwen2.5, and around 6 GiB for Gemma~3.
This is quite practically reasonable, allowing for quick adaptation to new schemas.

\paragraph{Results on programming language} The preprocessing time ranges from around 8 minutes for Java to around 1.3 hours for SQL.
The memory footprint ranges from around 40 GiB for Java to around 250 GiB for SQL.

While the preprocessing overhead for programming language grammars is higher than that for JSON schemas, it is still acceptable as it is only needed \textbf{once} per grammar and vocabulary pair.
The grammars of programming languages are typically stable, and the vocabulary is commonly shared by the same series of models, so the preprocessing only needs to be done infrequently by the preprocessing providers who have the necessary resources and expertise.
As a demonstration, we provide all the preprocessing results used in our experiments in our replication package.

\subsubsection{Applicability Analysis for Decoding Users to Preprocess}

\paragraph{Applicability analysis: preprocessing time}
Sometimes the users of \method{} may also need to perform the preprocessing themselves, for example when they want to use \method{} for self-defined JSON schema.
In this case, they should consider how much time they can save  during decoding by using \method{} compared to using other GCD methods, and compare it with the preprocessing time to decide whether to use \method{}.
We provide a break-even point analysis to help users make this decision,
i.e., the total time users need to use \method{} for decoding to make it more time-efficient than using LLGuidance, our fastest baseline in the experiments.

The results are presented in Table~\ref{tab:preprocess}.
For JSON schemas, the preprocessing time is generally small, and the break-even point is roughly half minutes of decoding time.
Therefore, doing preprocessing with \method{} by users for new JSON schemas is generally recommended.
For programmaing languages, though, the perprocessing time is higher (several minutes to more than an hour), so users may consider using the preprocessing results provided by the preprocessing providers, e.g. model developers or other users, to avoid the preprocessing overhead.

\paragraph{Applicability analysis: memory usage}
The memory usage during preprocessing is generally higher than that during decoding.
But preprocessing can be performed on cloud instances with large memory capacity if needed.
For example, AWS offers instances with 500 GiB of memory for on-demand usage at around 2.5 USD per hour.
Also, the memory usage can potentially be reduced by optimizing the implementation, such as using more efficient data structures or algorithms.

\subsection{Overhead for Decoding Users}\label{sec:runtime_overhead}

\input{floats/runtime_overhead}

\method{} has a runtime overhead for maintaining the parser stack and the token masks, as described in Section~\ref{sec:method}.
The user also saves the preprocessing results on disk, which is also part of the overhead for users.\footnote{The preprocessing results are compressed and loaded as needed using zstandard \citep{zstd}.}
Table~\ref{tab:runtime_overhead} summarizes the runtime overhead of \method{} for different grammars and models.
Overall, the disk usage is generally small, and the runtime memory overhead is generally managable for practical applications.
The runtime time overhead, as shown in Section~\ref{sec:exp:overhead}, is generally negligible, making \method{} a practical choice for grammar-constrained decoding in real-world applications.

\begin{tcolorbox}[size=title]
  \textbf{Takeaway}:
  \method{} preprocessing is affordable for preprocessing providers. Decoding users should consider the break-even point analysis when performing the preprocessing themselves.
\end{tcolorbox}

\subsection{Threats to Validity}
\label{sec:threats}

In this section, we discuss potential threats to the validity of our experimental results.
\begin{enumerate}[wide]
  \item \textbf{Generality of grammars.} \method{} can only handle deterministic context-free grammars (DCFGs), and our theoretical analysis is based on DCFGs,
    while some existing methods can handle more general grammar formats, such as non-deterministic context-free grammars (CFGs).
    In our experiments, we only consider DCFGs for fair comparison.
    However, many practical grammars, such as most programming languages and JSON schemas, can be expressed as DCFGs.
  \item \textbf{Generality of tokenization.} Different models may have different vocabulary that affect the performance of \method{} and baselines.
    We mitigate this threat by using multiple LLM series with different vocabulary sizes and tokenization schemes.
    In addition, the online time complexity of \method{} is independent of the vocabulary size, so it only affects the preprocessing overhead.
  \item \textbf{Generality of model sizes.} Different-sized models may have different decoding throughputs that affect the relative performance of \method{} and baselines.
    Generally, larger models have lower decoding throughput due to their increased computational requirements, making the throughput differences less significant.
    We try to mitigate this threat by evaluating the throughputs on Qwen2.5~7B in Section~\ref{sec:exp:throughput} in addition to the smaller models used in other experiments.
  \item \textbf{LLM randomness.} The evaluation results in RQ4 might be effected by the randomness of LLMs.
  We mitigate this threat by using deterministic greedy decoding in pass@1, and using 50 samples per question to calculate pass@$k$ for $k>1$ to reduce the variance.
  When comparing \method{} with unconstrained decoding, we use the same decoding settings for both methods with the same default prompts, so the relative improvement should not be significantly affected by randomness.
  \item \textbf{Data leakage.} The evaluation results in RQ4 might be effected by data leakage in the training data of the LLMs.
  We try to mitigate the threat of data leakage by only comparing \method{} with unconstrained decoding on the same model with the same decoding settings, so the relative improvement should not be significantly affected by data leakage.
\end{enumerate}

%% file: floats/break-even-demo.tex
\begin{wrapfigure}[20]{r}{0.5\textwidth}
\input{floats/break-even-demo-wrapped}
\end{wrapfigure}

%% file: floats/break-even-demo-wrapped.tex
\centering
\begin{tikzpicture}

\draw[->] (0,0) -- (4.5,0) node[right] {Total time};
\draw[->] (0,0) -- (0,4) node[right, align=left] {Total tokens\\generated};

\draw[gray, dashed] (0,0) -- (2.193,4) node[pos=0.9, right, align=left] {Unconstrained\\7524.7 tokens/s};
\draw[blue, line width=1pt] (0,0) -- (4.5,3.357) node[below right, align=left] {LLGuidance\\3077.0 tokens/s};
\draw[black!50!green, line width=1pt] (2.123,0) --(4.5,3.777) node[right, align=left] {\method{}\\6553.1 tokens/s};

\draw[red, dashed] (4.00,2.985) -- (4.00,0);
\draw[red, dashed] (4.00,2.985) -- (0,2.985);

\draw[black!50!green, line width=2pt] (0,0) -- node[below, align=center] {preprocessing\\28.3 s} (2.123,0);

\draw[red, line width=2pt] (2.123,0) -- node[below, align=center] {break-even\\25.1 s} (4.00,0);
\draw[red, line width=2pt] (0,2.985) -- node[right, align=left] {164.2k\\tokens} (0,0);

\end{tikzpicture}
\caption{The break-even point of \method{} versus LLGuidance, based on data from Llama~3.2~1B on JSON Schemas. Different colors represent different methods. The break-even point is where the two lines intersect, indicating the total time users need to use \method{} to amortize the preprocessing time compared to using LLGuidance \textbf{for doing the preprocessing by themselves}.}
\label{fig:break_even}

%% file: floats/preprocess.tex
\begin{table}[t]
\centering
\caption{\textbf{Preprocessing} overhead and break-even point for preprocessing against LLGuidance of \method{} on different grammars. The break-even point for preprocessing indicates the total time (in seconds) the user needs to use \method{} to amortize the preprocessing time compared to using LLGuidance.}
\label{tab:preprocess}
\begin{tabular}{cl|cccc}
\toprule
\multirow{2}{*}{Metrics} & \multirow{2}{*}{Model} & \multicolumn{4}{c}{Grammar} \\
& & Java & Go & SQL & JSON Schemas \\\midrule
\multirowcell{3}{Time Used in\\Preprocessing (seconds)} & Llama 3 & 466.9 &	1171.7 & 4662.7 & 28.3 \\
& Qwen2.5 & 464.5 & 1166.8 & 4770.0 & 28.6 \\
& Gemma 3 & 472.1 & 1362.2 & 1367.4 & 53.2 \\\midrule
\multirowcell{3}{Memory Usage during\\Preprocessing (GiB)} & Llama 3 & 40.8 &	87.7 & 255.3 & 3.04 \\
& Qwen2.5 & 40.4 & 86.6 & 254.2 & 3.13 \\
& Gemma 3 & 36.0 & 88.8 & 188.7 & 5.95 \\\midrule\midrule
\multirowcell{3}{Break-even Point\\(seconds)} & Llama 3 1B & 146.6 & - & 2826.2 & 25.1 \\
& Qwen2.5 0.5B & 101.5 & - & 2760.7 & 20.1 \\
& Gemma 3 270M & 67.2 & - & 508.5 & 40.0 \\\bottomrule
\end{tabular}
\end{table}

%% file: floats/runtime_overhead.tex
\begin{table}[bt]
\centering
\caption{\textbf{Runtime} overhead of \method{} on different grammars.}
\label{tab:runtime_overhead}
\begin{tabular}{cl|cccc}
\toprule
\multirow{2}{*}{Metrics} & \multirow{2}{*}{Model} & \multicolumn{4}{c}{Grammar} \\
& & Java & Go & SQL & JSON Schemas \\\midrule
\multirowcell{3}{Runtime memory for\\DFA $\mathcal A$} & Llama 3 & 162.8  MiB & 752.2  MiB & 3.42  GiB & 296.8  KiB \\
& Qwen2.5 & 165.2  MiB & 763.4  MiB & 3.37  GiB & 284.6  KiB \\
& Gemma 3 & 124.3  MiB & 932.9  MiB & 1020.4  MiB & 288.7  KiB \\\midrule
\multirowcell{3}{Runtime memory for\\precomputed masks} & Llama 3 & 227.3  MiB & 1.68  GiB & 2.82 GiB & 1.84 MiB \\
& Qwen2.5 & 264.5 MiB & 1.98 GiB & 3.33 GiB & 2.02 MiB \\
& Gemma 3 & 457.4 MiB & 6.32 GiB & 2.78 GiB & 3.31 MiB \\\midrule\midrule
\multirowcell{3}{Disk Space of\\Preprocessing Results} & Llama 3 & 13.27 MiB & 28.16 MiB & 58.95 MiB & 0.54 MiB \\
& Qwen2.5 & 12.70 MiB & 28.37 MiB & 57.60 MiB & 0.51 MiB \\
& Gemma 3 & 13.13 MiB & 33.52 MiB & 24.04 MiB & 0.76 MiB \\\bottomrule
\end{tabular}
\end{table}

%% file: chapters/conclude.tex
\section{Conclusion}
\label{sec:conclude}

In this paper, we present \method{}, a novel approach for grammar-constrained decoding.
By constructing the exact requirements on the parser stack for each vocabulary token, \method{} can determine the valid tokens at each decoding step by a single pass through the parser stack.
Our experimental results demonstrate that \method{} achieves significant speedup over existing methods, and the end-to-end throughput of \method{} approaches that of unconstrained decoding.
This makes \method{} a practical choice for real-world applications that require grammar-constrained decoding.

One limitation of \method{} is that it requires preprocessing to construct the DFA for the given grammar and vocabulary.
We provide a break-even point analysis to help users decide whether to do preprocessing by themselves, which depends on the expected decoding usage.
In future, we plan to explore other types of grammars and constraints that can be efficiently handled by \method{}, and other optimizations to further improve its efficiency and scalability.

%% file: chapters/reproduce.tex
\section*{Data Availability}

We open-source the code to facilitate reproducibility of our results at \url{https://github.com/Gompyn/PSC}.
It includes the implementation of \method{}, the datasets used in the experiments, the scripts to run the experiments and generate the results in the paper, and the preprocessing results built from the grammars used in our experiments.